\newtheorem{theorem}{Theorem}
\newtheorem{lemma}{Lemma}
\newtheorem{proposition}{Proposition}
\newtheorem{definition}{Definition}
\providecommand{\algorithmname}{Algorithm}
\algnewcommand\algorithmicinput{\textbf{Input}:}
\algnewcommand\algorithmicoutput{\textbf{Output}:}
\algnewcommand\INPUT{\item[\algorithmicinput]}
\algnewcommand\OUTPUT{\item[\algorithmicoutput]}
\newcolumntype{L}[1]{>{\raggedright\let\newline\\\arraybackslash\hspace{0pt}}m{#1}}
\newcolumntype{C}[1]{>{\centering\let\newline\\\arraybackslash\hspace{0pt}}m{#1}}
\newcolumntype{R}[1]{>{\raggedleft\let\newline\\\arraybackslash\hspace{0pt}}m{#1}}
\newcommand*{\affaddr}[1]{#1} 
\newcommand*{\affmark}[1][*]{\textsuperscript{#1}}
\global\long\def\bx{\mathbf{x}}
\global\long\def\bbeta{\boldsymbol{\beta}}
\title{Density-on-Density Regression}
\author{%
    Yi Zhao\affmark[1], Abhirup Datta\affmark[2], Bohao Tang\affmark[2], Vadim Zipunnikov\affmark[2],  Brian S. Caffo\affmark[2], \\ 
    and for the Alzheimer's Disease Neuroimaging Initiative\footnote{Data used in preparation of this article were obtained from the Alzheimer's Disease Neuroimaging Initiative (ADNI) database (\url{adni.loni.usc.edu}). As such, the investigators within the ADNI contributed to the design and implementation of ADNI and/or provided data but did not participate in analysis or writing of this report. A complete list of ADNI investigators can be found at: \url{http://adni.loni.usc.edu/wp-content/uploads/how_to_apply/ADNI_Acknowledgement_List.pdf}} \\
    \affaddr{\affmark[1]Department of Biostatistics and Health Data Science, Indiana University School of Medicine} \\
    \affaddr{\affmark[2]Department of Biostatistics, Johns Hopkins Bloomberg School of Public Health} \\
}
\date{}
\providecommand{\keywords}[1]
{
  {\small	
  \textbf{Keywords:} #1 }
}
\begin{document}

\maketitle

\thispagestyle{empty}

\begin{abstract}

In this study, a density-on-density regression model is introduced, where the association between densities is elucidated via a warping function. The proposed model has the advantage of a being straightforward demonstration of how one density transforms into another. Using the Riemannian representation of density functions, which is the square-root function (or half density), the model is defined in the correspondingly constructed Riemannian manifold. To estimate the warping function, it is proposed to minimize the average Hellinger distance, which is equivalent to minimizing the average Fisher-Rao distance between densities. An optimization algorithm is introduced by estimating the smooth monotone transformation of the warping function. Asymptotic properties of the proposed estimator are discussed. Simulation studies demonstrate the superior performance of the proposed approach over competing approaches in predicting outcome density functions. Applying to a proteomic-imaging study from the Alzheimer's Disease Neuroimaging Initiative, the proposed approach illustrates the connection between the distribution of protein abundance in the cerebrospinal fluid and the distribution of brain regional volume. Discrepancies among cognitive normal subjects, patients with mild cognitive impairment, and Alzheimer's disease (AD) are identified and the findings are in line with existing knowledge about AD.
\end{abstract}

\keywords{Fisher-Rao metric; Hellinger distance; Object-oriented regression; Probability density functions; Riemannian manifold}



\clearpage
\setcounter{page}{1}

\section{Introduction}
\label{sec:intro}

In this manuscript, a density-on-density regression model is introduced that handles subject-specific density outcomes and density predictors. 
Regression analysis is a fundamental tool in statistical modeling to study the association between two objects, the outcome variable and the predictor variable. The most common form is the linear regression model assuming the two objects are in Euclidean space. Particularly, denote $y\in\mathbb{R}$ as the response variable and $\bx\in\mathbb{R}^{p}$ as the explanatory variable, a liner regression model has the following form
\begin{equation}\label{eq:linear_reg}
  \mathbb{E}\left(y\mid\bx\right)=\bx^\top\bbeta,
\end{equation}
where $\bbeta\in\mathbb{R}^{p}$ is the model coefficient. Extending to a Hilbert space, functional regression models were introduced~\citep{ramsay2005functional}. Properties and various extensions were well studied~\cite[see review articles by][and among others]{morris2015functional,wang2016functional}. With the advancement of modern technologies, data in more complex forms, such as data objects in Riemannian spaces, are collected across various scientific domains, including medical imaging, computational biology, computer vision, digital health technologies, and many others. Extending regression models to deal with these novel data types attract increasing attention~\citep{davis2010population,cornea2017regression,dai2021modeling}. In this study, we focus on a regression scenario where both outcome and predictor are probability densities on the real line satisfying certain regularity conditions.

The study of random densities is challenging considering the fact that a probability density is both functional and non-Euclidean, where the non-linear constraints require the function to be positive and with a unit integral. An extension of Model~\eqref{eq:linear_reg} to the scenario where both $y$ and $x$ are density functions is not straightforward, due to the lack of a linear structure, where most basic statistical properties and theories rely on~\citep{petersen2022modeling}. Equivalent to studying the density functions, most recent research focuses on the study of probability distributions, where non-linear constraints are embedded as well.
One direction of addressing such an obstacle is to utilize an appropriate transformation on the distributions mapping the problem to Hilbert spaces and handling it using existing techniques~\citep{kneip2001inference,petersen2016functional}. Example transformations include log quantile density, log hazard function, and many others. 
\citet{yang2020quantile} proposed quantile function outcome regressions with scalar predictors that were conceptually similar to functional principal component regression. To guarantee non-decreasing monotonicity of quantile function outcome, \citet{yang2020random} imposed monotonicity on individual functional regression coefficients by employing I-splines.
An important limitation of transformation approaches is the lack of isometry which leads to deformations in natural geometry and changes in distances between data objects.
Another direction is to perform distribution regression in its native space equipped with a proper distance metric. Examples of distance metrics include the Fisher-Rao metric~\citep{srivastava2007riemannian} and the Wasserstein metric~\citep{panaretos2020invitation}, each tied to a manifold structure on distributions.
Considering Riemannian structures, the Fisher-Rao metric is a natural choice, as it is the only metric invariant to re-parameterizations on the functions that form a manifold~\citep{cencov1982statistical}. For density functions, the corresponding Riemannian representation is the square root function and the Fisher-Rao distance is the spherical geodesic distance between square root densities~\citep{srivastava2007riemannian}.
The Wasserstein distance is an optimal transport metric for distributions. Based on the fact that the tangent space of probability distributions is a subspace of the infinite-dimensional Hilbert space, \citet{chen2021wassersteinreg} introduced a distribution-on-distribution regression model with an isometric mapping. Utilizing the geometric properties, the asymptotic properties were studied under both the Wasserstein metric and parallel transport. A Wasserstein autoregressive model for density time series was also introduced independently by \citet{zhang2022wasserstein} around the same time. 
These two approaches offer a well-developed toolbox and pave the theoretical foundation for distribution-on-distribution regression in Wasserstein space. However, both suffer from a straightforward interpretation of the regression model lacking connections between distributions in a point-by-point sense.
Following a shape-constraint approach, namely exploiting convexity, \citet{ghodrati2021distribution} proposed to perform distribution-on-distribution regression via an optimal transport map. This approach yields a regression operator in a pointwise sense at the level of the original distributions leading to a clean and transparent interpretation. 
\citet{ghosal2023distributional} proposed a multidimensional extension of distributional outcome regression via quantile functions that handled both scalar and distributional predictors and can be seen as a generalization of \citet{yang2020random} and \citet{ghodrati2021distribution}.
Another recent example of multidimensional extension of Wasserstein distributional regression is the sliced Wasserstein regression proposed by \citet{chen2023sliced}.

The above-mentioned Wasserstein regression models focus on distributional representations using quantile functions. In contrast, our density-on-density regression approach offers an alternative that leverages the invariance property of the Fisher-Rao metric that results in an interpretable relationship between density functions. As such, we specify that both outcome and predictor densities are related via an isometric mapping which we call a warping function. The model is defined in the tangent space of half densities equipped with $\mathbb{L}^{2}$-metric. It is proposed to minimize the $\mathbb{L}^{2}$-distance between the half densities to estimate the warping function, which is equivalent to minimizing the Hellinger distance (or the Fisher-Rao metric) between outcome and predictor. Asymptotic properties of the estimator are studied under certain regularity conditions.

The proposed framework is motivated by an omics-imaging study from the Alzheimer's Disease Neuroimaging Initiative (ADNI). AD is an irreversible neurodegenerative disorder, mostly seen in the aging population. The disease process results in progressive declines in cognitive and behavioral function, especially memory, thereby having a broad impact on daily life and mortality. AD is quite prevalent, currently being ranked as the seventh-leading cause of death in the United States and the fifth-leading cause of death among Americans aged 65 and older~\citep{AD2022}. The precise causal mechanisms of AD are poorly understood, and no effective treatment is available. Understanding disease mechanisms and identifying therapeutic targets are thus crucially important. The ADNI study was launched in 2003 aiming to acquire assessments from various biological, clinical, and neuropsychological modalities during AD progression. This study focuses on two such modalities, cerebrospinal fluid (CSF) proteomics and brain structural magnetic resonance imaging (MRI). The CSF proteomics study aims to quantify protein (or protein segment) intensities in the CSF. The structural MRI offers information about brain structure, such as the volume of brain regions after applying a brain parcellation atlas. It has been shown that there exist connections between the deposition of protein markers, including amyloid-$\beta$ and tau, and the atrophy in certain brain areas, such as the entorhinal cortex and hippocampus~\citep{mormino2009episodic,pini2016brain,wesenhagen2020cerebrospinal}. Among the existing literature, more attention has been paid to the association between a single or a set of features from each modality, rather than the association between feature distributions.
In neuroimaging research, an example of considering densities as the observation unit is the distribution of brain functional connectivity acquired from the resting-state functional MRI experiments~\citep{petersen2016functional,tang2023differences}. 
Here, the abundance of various CSF proteins is assumed to follow a probability distribution and the acquired intensity data are random realizations, so the volume of different regions spanned over the whole brain. The study units are the density of the CSF protein abundance and the density of brain regional volumes from each individual. The objective then is to quantify and describe the association between two densities. Considering the volumetric density may seem unusual, by not employing the spatial information contained in the regions. However, considering image intensity histograms is a basic first step in nearly all image analysis from photographs to medical images of all sorts. In our case, if CSF-related volumetric atrophy is non-localized, or inconsistently localized across subjects, a density-based approach is more relevant than an approach that contrasts specific locations across subjects. Moreover, we reduce the multiple comparisons problem dramatically. Of course, in applied analysis, one would perform both sorts of analyses. We argue that histogram-based approaches (like density regression) are a natural first step, like an omnibus $F$-test in ANOVA, to be followed up with finer scale analyses. We further argue that the less frequent use of density-on-density methods in neuroimaging is primarily due to the much lower amount of methodological development in density regression when compared to high dimensional estimation and testing, or analyses following localized dimension reduction techniques, such as principal component analysis (PCA).

The rest of the manuscript is organized as follows. Section~\ref{sec:model} introduces the concept of a warping function, reviews necessary facts from the geometry of Riemannian manifolds, and proposes our density-on-density regression model. Section~\ref{sec:method} develops an estimator for the warping function and studies its asymptotic properties. In Section~\ref{sec:sim}, the performance of the proposed approach is evaluated and compared with existing methods through simulation studies. Section~\ref{sec:adni} applies the proposed density-on-density regression model to a proteomics-imaging study from ADNI. Section~\ref{sec:discussion} concludes the manuscript with a discussion.

\section{Model}
\label{sec:model}

Let $\mathcal{X}=\mathcal{Y}=[0,1]$ be the sample spaces. Here, without loss of generality, we assume the sample space is $[0,1]$ for both the predictor and response. Define the set of continuous probability density functions on $[0,1]$ as
\begin{equation}\label{eq:density_space}
  \mathscr{P}=\left\{f:[0,1]\mapsto\mathbb{R}_{\geq 0}~|~\int_{0}^{1}f(\omega)~\mathrm{d}\omega=1\right\}.
\end{equation}
Assume $(f_{i},g_{i})$ is a pair of density functions in $\mathscr{P}$ of two characteristics of unit $i$, for $i=1,\dots,n$, where $n$ is the number of units. 
Analogous to a linear (functional) regression problem, it is assumed that there exists a functional coefficient such that $g_{i}$ can be represented as a composition of $f_{i}$ and the coefficient function. However, for density functions, the coefficient function must satisfy the constraint that after composition, the resulting function is positive and has a unit integral.
The following definitions define such a coefficient function and an action that connects two densities.
\begin{definition}[Warping function]\label{def:warping}
  Let $\beta:[0,1]\rightarrow[0,1]$ be a function that satisfies the following: $\beta(0)=0$, $\beta(1)=1$, $\beta$ is invertible, and both $\beta$ and $\beta^{-1}$ are smooth. Then $\beta$ is called a \emph{boundary-preserving diffeomorphism} of $[0,1]$. Denote $\Gamma_{\beta}$ as the set of all such functions.
\end{definition}
\begin{definition}[An action $\odot$]\label{def:action}
Let $\mathbb{L}^{1}([0,1],\mathbb{R})$ denote the space of all (absolutely) integrable functions on $[0,1]$, we define the following action of $\Gamma_{\beta}$ on $\mathbb{L}^{1}([0,1],\mathbb{R})$:
\begin{equation}\label{eq:def_action}
  \mathbb{L}^{1}([0,1],\mathbb{R})\times\Gamma_{\beta}\rightarrow\mathbb{L}^{1}([0,1],\mathbb{R}), \quad f\odot\beta=(f\circ\beta)\beta',
\end{equation}
where $(f\circ\beta)=f(\beta(\omega))$ for $\omega\in[0,1]$ and $\beta'$ is the first-order derivative of $\beta$. 
\end{definition}
The action $\odot$ has the following properties.
\begin{enumerate}[(i)]
  \item Area preserving:
    \[
      \int (f\odot\beta)(\omega)~\mathrm{d}\omega=\int f(\beta(\omega))\beta'(\omega)~\mathrm{d}\omega=\int f(\omega)~\mathrm{d}\omega.
    \]
  \item Invertibility:
    \[
      (f\odot\beta)\odot\beta^{-1}=f.
    \]
  \item Grouping: for $\forall~\beta_{1},\beta_{2}\in\Gamma_{\beta}$,
    \[
      (f\odot\beta_{1})\odot\beta_{2}=f\odot(\beta_{1}\circ\beta_{2}).
    \]
\end{enumerate}
From the definition of $\beta$, it is monotonically increasing. Thus, if $f$ is a positive function, so is $f\odot\beta$. Also, $\odot$ is area preserving. 
Thus, $f\odot\beta$ is also a density function when $f$ is a density function. Analogous to a linear regression model, the invertibility property ensures the feasibility of swapping the response density and the predictor density. The grouping property, a form of associativity, can be considered as an equivalence to the change of variable in densities. It also corresponds to the grouping property in linear regression models. Setting $\beta_{2}=\beta_{1}^{-1}$, the grouping property infers invertibility.

The set of continuous probability density functions, $\mathscr{P}$, is a Banach manifold because the space of integrable functions on $[0,1]$ is a Banach space, but not a Hilbert space. Thus, one cannot directly define a functional regression model on density functions as on functions in a Hilbert space. Similar to the proposal in \citet{srivastava2007riemannian}, the Riemannian representation of density functions is utilized and the regression model is defined in the tied Riemannian manifold.
To define a geodesic path (and geodesic distance) between two density functions, as well as the Riemannian structure of $\mathscr{P}$, the following representation of a density function is considered. For $f\in\mathscr{P}$, let
\begin{equation}
  p(\omega)=\sqrt{f(\omega)}.
\end{equation}
$p(\omega)$ is also called the \textit{half density} of $f$, which falls in the positive orthant of the unit sphere, $\mathbb{S}_{\infty}$, defined as
\begin{equation}
  \mathbb{S}_{\infty}=\left\{p\in\mathbb{L}^{2}~|~\|p\|=\sqrt{\int p^{2}(\omega)~\mathrm{d}\omega}=1\right\}.
\end{equation}
$\mathbb{S}_{\infty}$ is a submanifold of $\mathbb{L}^{2}$, thus a Hilbert manifold. One can then define a distance between two densities through the geodesic distance between the corresponding half densities.

Before introducing the proposed density regression model, we first briefly review some concepts of Riemannian geometries. More details can be found in \citet{helgason2001differential,lang2012fundamentals,srivastava2016functional}. For an element $p\in\mathbb{S}_{\infty}$, the tangent space, denoted as $T_{p}(\mathbb{S}_{\infty})$, is defined as
\begin{equation}
  T_{p}(\mathbb{S}_{\infty})=\left\{v\in\mathbb{L}^{2}~|~\langle v,p\rangle =0\right\},
\end{equation}
where $\langle\cdot,\cdot\rangle$ is the inner product in $\mathbb{L}^{2}$. This inner product also makes $\mathbb{S}_{\infty}$ a \textit{Riemannian manifold} and defines the length of paths on the manifold. For a Hilbert manifold, the minimum length of paths between two points is achievable and the corresponding path is called a \textit{geodesic}. Considering an element (denoted as $p$) in a Riemannian manifold and a tangent vector (denoted as $v$) in its tangent space, there exists a unique constant-speed parameterized geodesic (denoted as $\alpha_{v}$) such that $\alpha_{v}(0)=p$ and $\alpha_{v}'(0)=v$. For the unit sphere $\mathbb{S}_{\infty}$ with $p\in\mathbb{S}_{\infty}$ and $v\in T_{p}(\mathbb{S}_{\infty})$, the geodesic has an explicit form and can be expressed as
\begin{equation}
  \alpha_{v}(\tau)(\omega)=\cos(\tau\|v\|)p(\omega)+\sin(\tau\|v\|)\frac{v(\omega)}{\|v\|}, \quad \omega\in[0,1],~\tau\in[0,1].
\end{equation}
An \textit{exponential map} is defined as $\mathrm{Exp}_{p}(v)=\alpha_{v}(1)$. For $\mathbb{S}_{\infty}$, the exponential map $\mathrm{Exp}_{p}:T_{p}(\mathbb{S}_{\infty})\rightarrow\mathbb{S}_{\infty}$ is given by
\begin{equation}
  \mathrm{Exp}_{p}(v)(\omega)=\cos(\|v\|)p(\omega)+\sin(\|v\|)\frac{v(\omega)}{\|v\|}, \quad \omega\in[0,1].
\end{equation}
This exponential map is many-to-one and surjective. 
If imposing a constraint that $\|v\|<\pi$, the map becomes one-to-one. The inverse of the exponential map, also called the \textit{logarithmic map} at $p$, has an analytical form. For $p,q\in\mathbb{S}_{\infty}$, $\mathrm{Exp}_{p}^{-1}:\mathbb{S}_{\infty}/\{-p\}\rightarrow T_{p}(\mathbb{S}_{\infty})$ is
\begin{equation}
  \mathrm{Log}_{p}(q)=\mathrm{Exp}_{p}^{-1}(q)(\omega)=\frac{\theta}{\sin\theta}\left(q(\omega)-p(\omega)\cos\theta\right), \quad \text{where } \theta=\cos^{-1}(\langle p,q\rangle) \text{ and } \omega\in[0,1].
\end{equation}

\subsection{A density-on-density regression model}

Consider continuous densities of two characteristics, $\{f_{i},g_{i}\}$, both defined on $[0,1]$, and a warping function $\beta\in\Gamma_{\beta}$, let
\begin{equation}
  p_{i}(\omega)=\sqrt{(f_{i}\odot \beta)(\omega)}=\sqrt{f_{i}(\beta(\omega))\beta'(\omega)}, \quad q_{i}(\omega)=\sqrt{g_{i}(\omega)}.
\end{equation}
The following data generating model is proposed for density-on-density regression,
\begin{equation}\label{eq:dens_model}
  q_{i}(\omega)=\mathrm{Exp}_{p_{i}}(e_{i})(\omega),
\end{equation}
where $e_{i}\in T_{p_{i}}(\mathbb{S}_{\infty})$ is a random variable in the tangent space of $p_{i}$. The $e_{i}$'s can be considered as the counterpart of the random errors in a regression model, but lie in different tangent spaces induced by the data points in $\mathbb{S}_{\infty}$. Assuming a base point $p_{0}\in\mathbb{S}_{\infty}$, for any point $p\in\mathbb{S}_{\infty}$, there exists a geodesic path to $p$ and a {\em parallel transport} map from $T_{p}(\mathbb{S}_{\infty})$ to $T_{p_{0}}(\mathbb{S}_{\infty})$. In geometry, parallel transport is a way of transporting geometric data along smooth curves in a manifold~\citep{spivak1970comprehensive}. 
The \textit{transported error} of $q_{i}$ with respect to $p_{i}$, denoted as $\varepsilon_{i}$, is then defined as the parallel transport of the actual error along the geodesic from $p_{i}$ to the base point $p_{0}$.  It is assumed that $\mathbb{E}(\varepsilon_{i})=0$ and $\mathrm{Var}(\varepsilon_{i})<\infty$. 
Figure~\ref{fig:geometry} presents a graphical demonstration of parallel transport. In the figure, the dashed line is the smooth geodesic path connecting $p_{0}$ and $p_{i}$. Along this path, the random error, $e_{i}\in T_{p_{i}}(\mathbb{S}_{\infty})$, with respect to $q_{i}$, is transported to a random error in $T_{p_{0}}(\mathbb{S}_{\infty})$, denoted as $\varepsilon_{i}$. By doing so, all $\varepsilon_{i}$'s are defined in the same tangent space, namely $T_{p_{0}}(\mathbb{S}_{\infty})$, and one can define properties of $\varepsilon_{i}$.
The idea of transporting model residuals was also employed in \citet{cornea2017regression}. 

\begin{figure}
  \begin{center}
    \includegraphics[width=0.5\textwidth, trim={1.75cm 0.75cm 1.75cm 0.5cm}, clip]{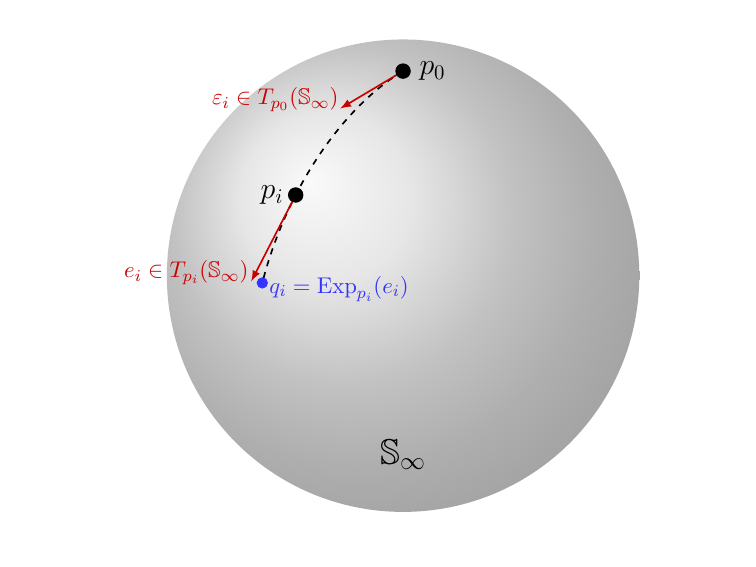}
  \end{center}
  \caption{\label{fig:geometry}Graphical demonstration of parallel transport.}
\end{figure}

Model~\eqref{eq:dens_model} is an extension of the function-to-function regression problem where both the predictor and outcome are (half) density functions. The model errors are defined in tangent spaces. From the definition of tangent space, this corresponds to the orthogonality assumption between the model error and the predictors in linear/functional regressions.
When the sample size $n=1$, the proposed model is equivalent to the problem of density registration~\citep{srivastava2016functional}. If the true warping function, $\beta^{*}$, is an identity function, the distribution of $X$ and $Y$ are identical (after centering and scaling to the sample space of $[0,1]$).

By modeling the densities via a warping function, Model~\eqref{eq:dens_model} offers an intuitive and straightforward presentation of how one distribution transforms into another.
Figure~\ref{fig:warping_example} shows examples of density functions before and after applying the warping function, $\beta(\omega)$. In the examples, $f(\omega)$ is the density function of the $\mathrm{Beta}(5,5)$ distribution, which is a symmetric function. For a convex shape of warping function, the resulted density function skews to the left (Figure~\ref{fig:warping_eg1}); while for a concave shape of warping function, the resulted density function skews to the right (Figure~\ref{fig:warping_eg2}). Figures~\ref{fig:warping_eg3} and~\ref{fig:warping_eg4} consider an ``S''-shape warping function. After warping, the variance of the new distribution decreases and the mode of the distribution shifts toward the inflection point of the warping function.

\begin{figure}
  \begin{center}
    \subfloat[\label{fig:warping_eg1}]{\includegraphics[width=0.5\textwidth,page=1]{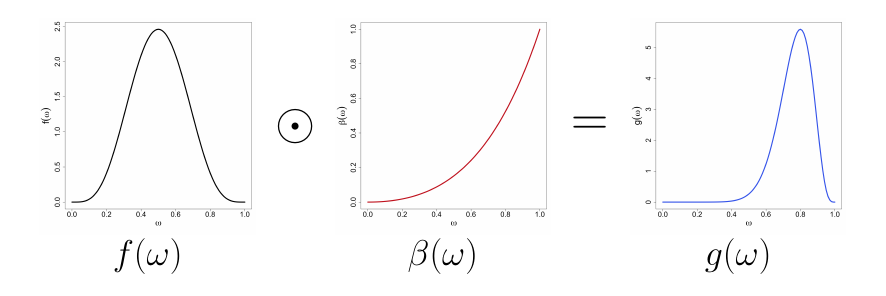}}
    \subfloat[\label{fig:warping_eg2}]{\includegraphics[width=0.5\textwidth,page=2]{Fig2.pdf}}

    \subfloat[\label{fig:warping_eg3}]{\includegraphics[width=0.5\textwidth,page=3]{Fig2.pdf}}
    \subfloat[\label{fig:warping_eg4}]{\includegraphics[width=0.5\textwidth,page=4]{Fig2.pdf}}
  \end{center}
  \caption{\label{fig:warping_example}Examples of density functions before and after applying the warping function, $\beta(\omega)$.}
\end{figure}

\section{Estimation Methods and Theory}
\label{sec:method}

Model~\eqref{eq:dens_model} is defined using the Riemannian representation of the density functions, that is the half densities. The space of the half densities is equipped with the $\mathbb{L}^{2}$-metric. Thus, it is proposed to estimate the warping function under the $\mathbb{L}^{2}$-metric.
For two sets of density functions, $\{f_{i}\}_{i=1}^{n}$ and $\{g_{i}\}_{i=1}^{n}$, under Model~\eqref{eq:dens_model}, the following estimator of the warping function is introduced,
\begin{equation}\label{eq:beta_est}
  \hat{\beta}=\underset{\beta\in\Gamma_{\beta}}{\arg\min}~\frac{1}{n}\sum_{i=1}^{n}\int_{0}^{1}\|\sqrt{g_{i}(\omega)}-\sqrt{(f_{i}\odot\beta)(\omega)}\|^{2}~\mathrm{d}\omega.
\end{equation}
The proposed estimator, $\hat{\beta}$, minimizes the average $\mathbb{L}^{2}$-distance between the two half densities across units. 
The reason of using half densities for optimization is that under the $\mathbb{L}^{2}$-metric, denoted as $d(\cdot,\cdot)$, for two densities $f_{1},f_{2}\in\mathscr{P}$, $d(f_{1},f_{2})\neq d(f_{1}\odot\beta,f_{2}\odot\beta)$, and thus can be arbitrarily close to each other \cite[the so-called pinching effect,][]{marron2015functional}. Using the square-root representation (or the half density), the following isometry property is satisfied.
\begin{lemma}\label{lemma:SRF_isometry}
  $f_{1}(\omega)$ and $f_{2}(\omega)$ are two density functions in $\mathscr{P}$. Let $p_{1}(\omega)=\sqrt{f_{1}(\omega)}$ and $p_{2}(\omega)=\sqrt{f_{2}(\omega)}$ denote the corresponding square-root functions (SRFs). Then,
  \begin{equation}
    d(p_{1},p_{2})=d((p_{1},\beta),(p_{2},\beta)),
  \end{equation}
  where $(p_{k},\beta)\equiv\sqrt{(f_{k}\odot\beta)(\omega)}=\sqrt{f_{k}(\beta(\omega))\beta'(\omega)}$, for $k=1,2$.
\end{lemma}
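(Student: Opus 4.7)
The plan is to reduce the claim directly to a change of variables in a one-dimensional integral, exploiting the fact that the warping factor $\beta'$ appears precisely as the Jacobian needed to make the substitution $u = \beta(\omega)$ work. No deep geometry is needed; the point of the lemma is really to record \emph{why} the Fisher--Rao/Hellinger geometry behaves nicely under the $\odot$-action, even though the $\mathbb{L}^{2}$ geometry on the densities themselves does not.

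First I would expand the right-hand side using the definition of $(p_k,\beta)$ and pull the common factor $\beta'(\omega)$ out of the square:
\begin{equation*}
  d\bigl((p_{1},\beta),(p_{2},\beta)\bigr)^{2}
  = \int_{0}^{1}\!\Bigl(\sqrt{f_{1}(\beta(\omega))\beta'(\omega)}-\sqrt{f_{2}(\beta(\omega))\beta'(\omega)}\Bigr)^{2}\mathrm{d}\omega
  = \int_{0}^{1}\!\bigl(p_{1}(\beta(\omega))-p_{2}(\beta(\omega))\bigr)^{2}\beta'(\omega)\,\mathrm{d}\omega,
\end{equation*}
where in the last step I use that $\beta'\geq 0$ (since $\beta$ is an increasing diffeomorphism by Definition~\ref{def:warping}), so $\sqrt{\beta'}$ factors cleanly out of each square root.

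Next I would apply the substitution $u=\beta(\omega)$, $\mathrm{d}u=\beta'(\omega)\,\mathrm{d}\omega$. This is where Definition~\ref{def:warping} is used in full: $\beta$ is a smooth bijection of $[0,1]$ with $\beta(0)=0$ and $\beta(1)=1$, so the limits of integration are preserved and no Jacobian sign issues arise. The integral then becomes
\begin{equation*}
  \int_{0}^{1}\bigl(p_{1}(u)-p_{2}(u)\bigr)^{2}\,\mathrm{d}u = d(p_{1},p_{2})^{2},
\end{equation*}
which is exactly the left-hand side squared. Taking square roots completes the proof.

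The only place where anything could go wrong is the validity of the change of variables, but this is guaranteed by $\beta\in\Gamma_{\beta}$: smoothness of $\beta$ and $\beta^{-1}$ together with the boundary conditions give a proper $C^{1}$ diffeomorphism of $[0,1]$, so the standard one-dimensional substitution formula applies directly. In that sense the ``hard'' part is really just recognizing that the half-density representation converts the $\odot$-action, which merely permutes density mass in a non-isometric way at the level of $f$, into an $\mathbb{L}^{2}$-isometry at the level of $p=\sqrt{f}$, with $\sqrt{\beta'}$ playing the role of the Jacobian factor that the $\mathbb{L}^{2}$ inner product needs.
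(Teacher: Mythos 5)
Your proof is correct and follows essentially the same route as the paper's: factor $\sqrt{\beta'(\omega)}$ out of both square roots and apply the change of variables $u=\beta(\omega)$ (the paper writes this as integrating against $\mathrm{d}\beta(\omega)$). Your version is slightly more explicit about why the substitution is valid, but the argument is identical.
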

\noindent Based on this isometry property, the elastic distance between two density functions is defined via the distance between the corresponding half densities. The proposed estimator in~\eqref{eq:beta_est} thus aims to minimize an analogy of the elastic distance between two sets of densities in the sense of population average. Through half densities, it transforms the action on $\mathscr{P}$ to an action on $\mathbb{L}^{2}$. Existing statistical properties investigated in linear spaces, such as estimation consistency, can be then extended. 

It is noted that the optimization in~\eqref{eq:beta_est} is equivalent to minimizing the average Hellinger distance between the outcome density and the predictor density after warping, where the Hellinger distance between two density functions, denoted as $H(f_{1},f_{2})$, is defined as
\begin{equation}
  H^{2}(f_{1},f_{2})=\frac{1}{2}\int_{0}^{1}\left(\sqrt{f_{1}(\omega)}-\sqrt{f_{2}(\omega)}\right)^{2}~\mathrm{d}\omega=1-\int_{0}^{1}\sqrt{f_{1}(\omega)f_{2}(\omega)}~\mathrm{d}\omega.
\end{equation}
\begin{lemma}\label{lemma:Hellinger_dist}
  Considering two density functions, $f_{1}$ and $f_{2}$, a warping function $\beta\in\Gamma_{\beta}$, and the action defined in~\eqref{eq:def_action}, the Hellinger distance has the following properties.
  \begin{enumerate}[(i)]
    \item Invariance to simultaneous warping: for $\forall~\beta\in\Gamma_{\beta}$,
      \[
        H(f_{1},f_{2})=H(f_{1}\odot\beta,f_{2}\odot\beta).
      \]
    \item Consistency to random warpings: for $\forall~\beta_{1},\beta_{2}\in\Gamma_{\beta}$,
      \[
        \underset{\beta\in\Gamma_{\beta}}{\min}~H(f_{1},f_{2}\odot\beta)=\underset{\beta\in\Gamma_{\beta}}{\min}~H(f_{1}\odot\beta_{1},(f_{2}\odot\beta_{2})\odot\beta).
      \]
    \item Inverse symmetry (inverse consistency):
      \[
        \hat{\beta}=\underset{\beta\in\Gamma_{\beta}}{\arg\min}~H(f_{1},f_{2}\odot\beta) \quad\Rightarrow\quad \hat{\beta}^{-1}=\underset{\beta\in\Gamma_{\beta}}{\arg\min}~H(f_{1}\odot\beta,f_{2}).
      \]
  \end{enumerate}
\end{lemma}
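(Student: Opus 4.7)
The plan is to derive part (i) directly from the change-of-variable formula, and then to obtain parts (ii) and (iii) as near-immediate consequences of (i) combined with the grouping and invertibility properties of $\odot$ stated just after Definition~\ref{def:action}, together with the fact that $\Gamma_\beta$ forms a group under composition. For part (i), I would start from
\[
H^{2}(f_{1}\odot\beta, f_{2}\odot\beta) \;=\; 1 - \int_{0}^{1}\sqrt{(f_{1}\odot\beta)(\omega)\,(f_{2}\odot\beta)(\omega)}\,\mathrm{d}\omega,
\]
expand the integrand via $f\odot\beta=(f\circ\beta)\beta'$ to pull a factor of $\beta'(\omega)$ outside the square root (legitimate because $\beta'>0$, since $\beta$ is a boundary-preserving diffeomorphism), and then make the substitution $u=\beta(\omega)$. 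Because $\beta(0)=0$, $\beta(1)=1$, and $\mathrm{d}u=\beta'(\omega)\,\mathrm{d}\omega$, the integral collapses to $\int_{0}^{1}\sqrt{f_{1}(u)f_{2}(u)}\,\mathrm{d}u$, giving $H(f_{1}\odot\beta, f_{2}\odot\beta)=H(f_{1},f_{2})$.

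For part (ii), I would apply (i) with warping $\beta_{1}^{-1}\in\Gamma_\beta$ to both arguments of $H(f_{1}\odot\beta_{1},(f_{2}\odot\beta_{2})\odot\beta)$, and then invoke invertibility and grouping of $\odot$ to rewrite it as $H(f_{1},\,f_{2}\odot(\beta_{2}\circ\beta\circ\beta_{1}^{-1}))$. Since $\Gamma_\beta$ is closed under composition and inversion, the map $\beta\mapsto\beta_{2}\circ\beta\circ\beta_{1}^{-1}$ is a bijection of $\Gamma_\beta$ onto itself, so minimizing the left-hand side over $\beta\in\Gamma_\beta$ yields exactly the same value as minimizing $H(f_{1}, f_{2}\odot\gamma)$ over $\gamma\in\Gamma_\beta$.

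For part (iii), applying (i) with warping $\beta^{-1}$ to both arguments of $H(f_{1}, f_{2}\odot\beta)$ and using invertibility of $\odot$ gives the pointwise identity $H(f_{1}, f_{2}\odot\beta)=H(f_{1}\odot\beta^{-1}, f_{2})$; as $\beta$ varies over $\Gamma_\beta$, so does $\gamma:=\beta^{-1}$, so any minimizer $\hat\beta$ of $\beta\mapsto H(f_{1}, f_{2}\odot\beta)$ corresponds to the minimizer $\hat\beta^{-1}$ of $\gamma\mapsto H(f_{1}\odot\gamma, f_{2})$. The only nontrivial verification in the whole argument is that $\Gamma_\beta$ really is a group under composition: closure under composition is immediate, and closure under inversion requires that $\beta^{-1}$ inherit smoothness and boundary preservation from $\beta$, both of which are built into Definition~\ref{def:warping}. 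Once these bookkeeping points are settled, (ii) and (iii) follow mechanically from (i), so the substantive content of the lemma is exactly the change-of-variables computation in part (i).
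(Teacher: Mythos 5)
Your proposal is correct and follows essentially the same route as the paper: part (i) is the change-of-variables computation that the paper establishes as Lemma~\ref{lemma:SRF_isometry} (noting $H^2(f_1,f_2)=\tfrac12 d^2(p_1,p_2)$), and the paper likewise treats (ii) and (iii) as direct consequences of (i). Your write-up actually supplies more detail than the paper does for (ii) and (iii) --- in particular the observation that $\beta\mapsto\beta_2\circ\beta\circ\beta_1^{-1}$ is a bijection of $\Gamma_\beta$, which the paper leaves implicit.
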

\noindent The three properties in Lemma~\ref{lemma:Hellinger_dist} guarantee that the proposed estimator in~\eqref{eq:beta_est} or by minimizing the Hellinger distance yields a good solution of connecting two sets of densities. The first property, which is equivalent to the isometry property in Lemma~\ref{lemma:SRF_isometry}, is a fundamental property indicating that the action of~\eqref{eq:def_action} maintains the point-to-point correspondence between the two densities. And thus, $\beta$ is a diffeomorphism. Both Properties (ii) and (iii) are consequences of Property (i). Analogizing to the ordinary least square approach in a linear regression problem, Property (ii) is equivalent to the consistency of linear data transformation and Property (iii) corresponds to the symmetric property of swapping the independent and dependent variables. 
Given the equivalence of minimizing the average Hellinger distance and~\eqref{eq:beta_est}, in the following, the two formulations will be used interchangeably. The estimator in~\eqref{eq:beta_est} can be then rewritten as
\begin{equation}\label{eq:beta_estimator}
  \hat{\beta}=\underset{\beta\in\Gamma_{\beta}}{\arg\min}~\frac{1}{n}\sum_{i=1}^{n}H^{2}(g_{i},f_{i}\odot\beta).
\end{equation}

The optimization problem in~\eqref{eq:beta_estimator} is to minimize the $\mathbb{L}^{2}$-metric between half densities. To analogize to the ordinary least squares estimator in linear regression and demonstrate that $\hat{\beta}$ also minimizes the average distance between the outcome and warped densities under some metric, in the following, we first briefly review the connection between the Fisher-Rao metric for density functions and the $\mathbb{L}^{2}$-metric in $\mathbb{S}_{\infty}$, or more specifically in $\mathbb{S}_{\infty}^{+}$. For more details, readers can refer to \citet{friedrich1991fisher} and \citet{srivastava2016functional}. The Fisher-Rao metric was introduced to measure the distance between two distributions using a differential geometric approach~\citep{rao1945information}. It is a representation of the Riemannian metric in the space of probability distributions. When introducing the metric, the Fisher information matrix was used (thus is called the Fisher-Rao metric). For a parametric distribution family, the Fisher-Rao metric, therefore, offers a lower bound on the expected error associated with the estimators. Considering the nonparametric setting, for a density function $f\in\mathscr{P}$ and $f>0$, assume $v_{1},v_{2}$ are two vectors in the tangent space $T_{f}(\mathscr{P})$, the Fisher-Rao metric is defined as
\begin{equation}\label{eq:FRmetric}
  ((v_{1},v_{2}))_{f}=\int v_{1}(\omega)v_{2}(\omega)\frac{1}{f(\omega)}~\mathrm{d}\omega.
\end{equation}

For the space of $\mathbb{S}_{\infty}$, the explicit form of the geodesic is given in Section~\ref{sec:model}. The following gives the geodesic and the \textit{Riemannian distance} between two densities.
\begin{lemma}\label{prop:R-distance}
  For two continuous density functions $f_{1},f_{2}\in\mathscr{P}$ on the domain $[0,1]$, the Riemannian distance with respect to the Fisher-Rao metric (geodesic length distance) is
  \begin{equation}\label{eq:R-distance}
    d_{\mathrm{R}}(f_{1},f_{2})=\cos^{-1}\left(\int_{0}^{1}\sqrt{f_{1}(\omega)}\sqrt{f_{2}(\omega)}~\mathrm{d}\omega\right).
  \end{equation}
  The geodesic between $f_{1}$ and $f_{2}$ is
  \begin{equation}
    \alpha(\tau)=\left(\frac{\sin((1-\tau)\theta)}{\sin\theta}\sqrt{f_{1}}+\frac{\sin(\tau\theta)}{\sin\theta}\sqrt{f_{2}}\right)^{2},
  \end{equation}
  where $\theta=d_{\mathrm{R}}(f_{1},f_{2})$ and $\tau\in[0,1]$.
\end{lemma}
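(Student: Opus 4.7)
The plan is to exploit the half-density map $\Phi: f \mapsto p = \sqrt{f}$, which sends $\mathscr{P}$ bijectively onto the non-negative orthant $\mathbb{S}_{\infty}^{+}$ of the unit sphere in $\mathbb{L}^{2}$. I would establish that $\Phi$ is an isometry between the Fisher-Rao Riemannian structure on $\mathscr{P}$ (up to the conventional constant absorbed in the square-root parameterisation) and the ambient $\mathbb{L}^{2}$ spherical structure on $\mathbb{S}_{\infty}^{+}$. Once this identification is in place, computing $d_{\mathrm{R}}(f_{1},f_{2})$ and the geodesic between $f_{1}$ and $f_{2}$ reduces to computing geodesics on an infinite-dimensional unit sphere, which, as recalled in Section~\ref{sec:model}, admit explicit closed-form expressions.

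To set up the isometry, I take a smooth path $f_{t} \in \mathscr{P}$ with $f_{0}=f$ and $\dot{f}_{0}=v \in T_{f}(\mathscr{P})$. The induced path $p_{t}=\sqrt{f_{t}}$ has tangent $\dot{p}_{0} = v/(2\sqrt{f})$, and substituting into the ambient inner product on $T_{p}(\mathbb{S}_{\infty}) \subset \mathbb{L}^{2}$ gives
\begin{equation*}
  \langle d\Phi(v_{1}), d\Phi(v_{2}) \rangle_{\mathbb{L}^{2}} = \int_{0}^{1} \frac{v_{1}(\omega)\, v_{2}(\omega)}{4\, f(\omega)}\, \mathrm{d}\omega,
\end{equation*}
which is the Fisher-Rao form~\eqref{eq:FRmetric} up to the standard factor carried by the half-density normalisation. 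Hence the Riemannian distance $d_{\mathrm{R}}(f_{1},f_{2})$ coincides with the spherical geodesic distance between $p_{1}=\sqrt{f_{1}}$ and $p_{2}=\sqrt{f_{2}}$ on $\mathbb{S}_{\infty}^{+}$.

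Next I would compute that spherical geodesic. Because the ambient geometry is Hilbertian and the sphere is rotationally symmetric, the minimising path from $p_{1}$ to $p_{2}$ lies in the two-dimensional subspace $\mathrm{span}\{p_{1},p_{2}\}$ and reduces to the corresponding great-circle arc. Applying the explicit parameterisation of $\alpha_{v}$ from Section~\ref{sec:model} with initial velocity $v=\mathrm{Log}_{p_{1}}(p_{2})$, and simplifying with the trigonometric identity $\cos(\tau\theta)\sin\theta - \sin(\tau\theta)\cos\theta = \sin((1-\tau)\theta)$, I obtain
\begin{equation*}
  \alpha(\tau) = \frac{\sin((1-\tau)\theta)}{\sin\theta}\, p_{1} + \frac{\sin(\tau\theta)}{\sin\theta}\, p_{2}, \qquad \theta = \cos^{-1}\langle p_{1},p_{2}\rangle = \cos^{-1}\!\int_{0}^{1}\!\sqrt{f_{1}(\omega)f_{2}(\omega)}\,\mathrm{d}\omega.
\end{equation*}
The arc-length of this curve equals $\theta$, which yields the distance formula~\eqref{eq:R-distance}; squaring $\alpha(\tau)$ to return to $\mathscr{P}$ gives the density-valued geodesic displayed in the lemma.

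The main obstacle, I expect, is the bookkeeping of the conventional constant relating~\eqref{eq:FRmetric} to the pullback of the $\mathbb{L}^{2}$ metric under $\Phi$; this is where different communities adopt slightly different normalisations, and it is worth stating the convention explicitly so that the factor $\cos^{-1}$ (rather than $2\cos^{-1}$) is obtained. A secondary, more routine, check is that the great-circle arc stays inside the non-negative orthant: since $p_{1},p_{2}\ge 0$ we have $\langle p_{1},p_{2}\rangle \ge 0$ and hence $\theta \in [0,\pi/2]$, so both coefficients $\sin((1-\tau)\theta)/\sin\theta$ and $\sin(\tau\theta)/\sin\theta$ are non-negative on $\tau\in[0,1]$, guaranteeing $\alpha(\tau)\in \mathbb{S}_{\infty}^{+}$ throughout and therefore $\alpha(\tau)^{2}\in\mathscr{P}$ along the entire geodesic.
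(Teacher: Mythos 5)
Your proposal is correct and follows essentially the same route the paper relies on: the square-root map pulls the Fisher--Rao metric back to the ambient $\mathbb{L}^{2}$ spherical metric up to a constant (this is exactly Lemma~\ref{lemma:FR-L2} in the supplement), after which the distance and geodesic are read off from the explicit great-circle formulas for $\mathbb{S}_{\infty}$ given in Section~\ref{sec:model}. The paper states this lemma without a dedicated proof (deferring to \citet{friedrich1991fisher} and \citet{srivastava2016functional}), so your write-up --- including the normalisation bookkeeping and the check that the arc stays in the non-negative orthant so that $\alpha(\tau)^{2}\in\mathscr{P}$ --- supplies details the paper leaves implicit.
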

\noindent Additional properties of the Fisher-Rao metric are presented in Section~\ref{appendix:sub:FR-metric} of the supplementary materials. Section~\ref{appendix:sub:Rstructure_Gamma} studies the Riemannian structure of $\Gamma_{\beta}$.

Given the fact that the Fisher-Rao metric quantifies the information retained in the data, \citet{bauer2020diffeomorphic} defines the \textit{Optimal Information Transport (OIT)} as the diffeomorphism that minimizes the Riemannian distance between two densities, that is for $f_{1},f_{2}\in\mathscr{P}$, find a diffeomorphism, $\beta$, that minimizes $d_{\mathrm{R}}(f_{1},f_{2}\odot\beta)$. 
Proposition~\ref{prop:beta_Riemannian} below demonstrates that $\hat{\beta}$ in~\eqref{eq:beta_estimator} also minimizes the average Riemannian distance between the outcome and warped densities. In this sense, $\hat{\beta}$ is an optimal information transport across units. 
\begin{proposition}\label{prop:beta_Riemannian}
  For a fixed $n$, assume that $\hat{\beta}$ is the solution in~\eqref{eq:beta_estimator}. Then, $\hat{\beta}$ is also the solution to the following minimization problem,
  \begin{equation}
    \min_{\beta\in\Gamma_{\beta}}~\frac{1}{n}\sum_{i=1}^{n}d_{\mathrm{R}}(g_{i},f_{i}\odot\beta).
  \end{equation} 
\end{proposition}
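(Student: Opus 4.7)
The plan is to reduce both optimization problems to the optimization of a single pointwise affinity. For each $\beta\in\Gamma_{\beta}$ and each $i$, let
\[
A_{i}(\beta)=\int_{0}^{1}\sqrt{g_{i}(\omega)(f_{i}\odot\beta)(\omega)}~\mathrm{d}\omega.
\]
First, I would expand the squared integrand in~\eqref{eq:beta_est} and use the area-preserving property of $\odot$ together with $\int g_{i}=1$ to obtain
\[
\int_{0}^{1}\bigl(\sqrt{g_{i}(\omega)}-\sqrt{(f_{i}\odot\beta)(\omega)}\bigr)^{2}~\mathrm{d}\omega = 2-2\,A_{i}(\beta),
\]
so that~\eqref{eq:beta_estimator} is equivalent to $\hat{\beta}=\arg\max_{\beta\in\Gamma_{\beta}}\sum_{i=1}^{n}A_{i}(\beta)$. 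Next, Lemma~\ref{prop:R-distance} gives $d_{\mathrm{R}}(g_{i},f_{i}\odot\beta)=\cos^{-1}(A_{i}(\beta))$, reducing the proposition to comparing $\arg\max_{\beta}\sum_{i}A_{i}(\beta)$ with $\arg\min_{\beta}\sum_{i}\cos^{-1}(A_{i}(\beta))$. Since $\cos^{-1}$ is strictly decreasing on $[0,1]$, enlarging any individual $A_{i}(\beta)$ strictly reduces the corresponding $\cos^{-1}(A_{i}(\beta))$, which yields the term-by-term link between the two objectives.

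The main obstacle is turning this term-by-term monotonicity into an equivalence of the two sums: because $\cos^{-1}$ is nonlinear, two distinct $\beta$'s achieving the same value of $\sum_{i}A_{i}(\beta)$ can produce different values of $\sum_{i}\cos^{-1}(A_{i}(\beta))$, and the Jacobian weights $1/\sqrt{1-A_{i}(\beta)^{2}}$ that arise upon differentiation vary across $i$. To close this gap, I would study the G\^ateaux derivative of both objectives along admissible perturbations of $\beta\in\Gamma_{\beta}$: the first-order condition for $\hat{\beta}$ in~\eqref{eq:beta_estimator} is a pointwise integral equation in $\omega$, and I would try to show that the corresponding first-order condition for $\sum_{i}\cos^{-1}(A_{i}(\beta))$ reduces to the same equation because the extra weights enter as strictly positive multipliers that do not alter the zero set. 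An alternative route is to exploit the constant-curvature geometry of the Hilbert sphere $\mathbb{S}_{\infty}$: because chord length (Hellinger) and geodesic length (Fisher--Rao) on $\mathbb{S}_{\infty}$ share the same infinitesimal behaviour, critical points of any smooth averaged objective indexed by $\beta\in\Gamma_{\beta}$ should agree. Formalizing either argument is the crux of the proof; the identities in the first paragraph are essentially algebraic and are not where the real difficulty lies.
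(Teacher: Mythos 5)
Your reduction to the affinity $A_{i}(\beta)=\int_{0}^{1}\sqrt{g_{i}(f_{i}\odot\beta)}\,\mathrm{d}\omega$ with $H^{2}(g_{i},f_{i}\odot\beta)=1-A_{i}(\beta)$ and $d_{\mathrm{R}}(g_{i},f_{i}\odot\beta)=\cos^{-1}(A_{i}(\beta))$ is exactly the starting point of the paper's proof, which sets $x_{i}=1-H^{2}(g_{i},f_{i}\odot\hat{\beta})$ and compares $\cos^{-1}x_{i}$ with $\cos^{-1}x_{i}'$ for a competing warping $\beta'$. You are also right to flag that the term-by-term monotonicity of $\cos^{-1}$ does not by itself transfer to the sums: since $\cos^{-1}$ is strictly concave and decreasing on $[0,1]$, $\arg\max_{\beta}\sum_{i}A_{i}(\beta)$ and $\arg\min_{\beta}\sum_{i}\cos^{-1}(A_{i}(\beta))$ can in principle differ --- the abstract configurations $(A_{1},A_{2})=(0.5,0.5)$ (affinity sum $1.0$, arccos sum $\approx 2.09$) and $(A_{1},A_{2})=(0.99,0)$ (affinity sum $0.99$, arccos sum $\approx 1.71$) order oppositely under the two objectives. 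The paper handles this step by Taylor-expanding $\cos^{-1}$ and bounding the difference of the $d_{\mathrm{R}}$-sums by $\tfrac{3}{2}\cdot\tfrac{1}{n}\sum_{i}(x_{i}'-x_{i})$ up to a ``$\lesssim$'', i.e., by a constant multiple of the difference of the Hellinger objectives.

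However, your proposal does not actually close the gap you identify; it stops precisely where a proof is required. Neither of the two routes you sketch works as stated. For (i), matching first-order conditions would at best identify critical points, not global minimizers; moreover, the weights $1/\sqrt{1-A_{i}(\beta)^{2}}$ sit inside the sum over $i$ and multiply different integrands for different $i$, so they cannot be factored out of the G\^ateaux derivative, and the zero sets of the two stationarity equations need not coincide. For (ii), the infinitesimal agreement of the chordal (Hellinger) and geodesic (Fisher--Rao) metrics on $\mathbb{S}_{\infty}$ controls only the local behaviour of the metrics, whereas both objectives aggregate finite distances, so it yields no conclusion about the averaged functionals. To be fair, the paper's own ``$\lesssim$'' step is likewise only an approximate bound and does not rigorously exclude the reordering phenomenon above; a fully rigorous argument would need either an additional assumption confining all $A_{i}(\beta)$ to a regime where a uniform two-sided comparison of $1-A$ and $\cos^{-1}(A)$ holds, or a reformulation of the claim.
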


The following theorem shows that $\hat{\beta}$ is a consistent estimator of $\beta$. 
In practice, neither $f_{i}$ nor $g_{i}$ is directly observed but can be estimated from the observed (discretized) data. When the considered estimator of the densities is consistent, such as the kernel density estimator~\citep{wasserman2006all}, the consistency of estimating $\beta$ follows.
\begin{theorem}\label{thm:beta_asmp}
  Assume $\{f_{i}\}$ and $\{g_{i}\}$ are continuous density functions on $[0,1]$ satisfying~\eqref{eq:dens_model}, where $p_{i}=\sqrt{f_{i}\odot\beta}$ and $q_{i}=\sqrt{g_{i}}$, for $i=1,\dots,n$. Denote $\varepsilon_{i}$ as the parallel transported model error with $\mathbb{E}(\varepsilon_{i})=0$ and $\mathrm{Var}(\varepsilon_{i})<\infty$. Let $\hat{\beta}$ be an estimator of $\beta$ that minimizes the average Hellinger distance (as in~\eqref{eq:beta_estimator}). $\hat{\beta}$ is a consistent estimator of $\beta$, that is,
    \begin{equation}
      d(\hat{\beta},\beta)\overset{\mathcal{P}}{\longrightarrow} 0, \quad \text{as } n\rightarrow\infty,
    \end{equation}
  where $d(\cdot,\cdot)$ is a distance metric in $\Gamma_{\beta}$ and $\overset{\mathcal{P}}{\longrightarrow}$ denotes convergence in probability.
\end{theorem}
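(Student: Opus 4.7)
The plan is to treat this as an M-estimator consistency problem in the style of van der Vaart (Theorem 5.7). Define the sample and population criteria
\begin{equation*}
M_n(\beta') = \frac{1}{n}\sum_{i=1}^n H^2(g_i, f_i\odot\beta'), \qquad M(\beta') = \mathbb{E}\!\left[H^2(g_i, f_i\odot\beta')\right],
\end{equation*}
so that $\hat{\beta}$ minimizes $M_n$. Consistency follows from three ingredients: (a) the true $\beta$ is the unique minimizer of $M$; (b) $\sup_{\beta'\in\Gamma_\beta} |M_n(\beta') - M(\beta')| \xrightarrow{\mathcal{P}} 0$; and (c) an argmin continuity argument turns (a) and (b) into $d(\hat{\beta},\beta)\xrightarrow{\mathcal{P}} 0$. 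I would carry these out in this order.

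For identification (step (a)), the key is to exploit the geodesic structure on $\mathbb{S}_\infty$. Using $H^2(g_i,f_i\odot\beta') = 1 - \langle q_i, p_i(\beta')\rangle$ with $p_i(\beta') := \sqrt{f_i\odot\beta'}$, I would write the decomposition $p_i(\beta') = \cos\theta_i(\beta')\, p_i + \sin\theta_i(\beta')\, w_i(\beta')$, where $\theta_i(\beta') = \cos^{-1}\langle p_i, p_i(\beta')\rangle$ and $w_i(\beta')\in T_{p_i}(\mathbb{S}_\infty)$ is the unit direction. Substituting the model $q_i = \cos(\|e_i\|)\, p_i + \sin(\|e_i\|)\, e_i/\|e_i\|$ and using the orthogonality $\langle e_i,p_i\rangle = 0$ gives
\begin{equation*}
H^2(g_i, f_i\odot\beta') = 1 - \cos(\|e_i\|)\cos\theta_i(\beta') - \tfrac{\sin(\|e_i\|)}{\|e_i\|}\sin\theta_i(\beta')\,\langle e_i, w_i(\beta')\rangle.
\end{equation*}
Transporting the cross term to $T_{p_0}(\mathbb{S}_\infty)$ along the geodesic from $p_i$ to $p_0$ (an isometry that preserves inner products) converts $\langle e_i, w_i(\beta')\rangle$ to an inner product involving $\varepsilon_i$. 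The zero-mean assumption on $\varepsilon_i$, together with a conditional independence argument (the direction $w_i(\beta')$ is determined by $f_i$ and $\beta'$, hence independent of the noise), eliminates the cross term, leaving $M(\beta') = 1 - \mathbb{E}[\cos(\|e_i\|)\cos\theta_i(\beta')]$. Assuming the noise radius is bounded away from $\pi/2$ so that $\cos(\|e_i\|) > 0$, this is minimized precisely when $\theta_i(\beta') = 0$ almost surely, which by the injectivity of $\beta'\mapsto f_i\odot\beta'$ (for generic $f_i$) forces $\beta' = \beta$.

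For uniform convergence (step (b)), I would restrict $\Gamma_\beta$ to a compact class of boundary-preserving diffeomorphisms with uniformly bounded smoothness (e.g., $\|\beta'\|_{C^k}\le C$), using Arzel\`a--Ascoli to ensure compactness in the sup-norm; the B-spline monotone parameterization adopted in the paper already realizes this restriction. The integrand $H^2(g_i, f_i\odot\beta')$ is uniformly bounded by $1$ and continuous in $\beta'$ in the sup-norm topology, so a bracketing-entropy or Glivenko--Cantelli argument delivers uniform convergence. Combining (a) and (b) with the argmin continuity theorem then gives $d(\hat\beta,\beta)\xrightarrow{\mathcal{P}} 0$. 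The main obstacle is step (a): carefully justifying the vanishing of the cross-term expectation under only the stated moment assumptions on $\varepsilon_i$ in $T_{p_0}(\mathbb{S}_\infty)$, since the cross term lives in $T_{p_i}(\mathbb{S}_\infty)$ and must be transported back; a secondary difficulty is choosing a metric $d$ on $\Gamma_\beta$ under which the compactification in step (b) is both realistic and compatible with the M-estimator machinery.
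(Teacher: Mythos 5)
Your plan is the standard M-estimation consistency scheme (identification, uniform convergence, argmin continuity), and this is in fact the same route the paper takes --- except that the paper's entire proof is three sentences long: it observes that the transported errors have mean zero, that $\hat\beta$ minimizes the average $\mathbb{L}^2$-distance in $\mathbb{S}_\infty$, and then asserts ``from the law of large numbers, the consistency of the proposed estimator holds.'' Everything you propose to do --- the spherical decomposition $p_i(\beta') = \cos\theta_i(\beta')\,p_i + \sin\theta_i(\beta')\,w_i(\beta')$, the resulting identity $H^2(g_i,f_i\odot\beta') = 1-\cos(\|e_i\|)\cos\theta_i(\beta') - \frac{\sin\|e_i\|}{\|e_i\|}\sin\theta_i(\beta')\langle e_i,w_i(\beta')\rangle$, the compactification of $\Gamma_\beta$ for a Glivenko--Cantelli argument, and the argmin continuity step --- is absent from the paper and is exactly what would be needed to make its assertion rigorous. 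Your algebra in the identification step is correct, and the conclusion that $M(\beta')$ is uniquely minimized at $\beta'=\beta$ (given a vanishing cross term, $\cos\|e_i\|>0$, and injectivity of $\beta'\mapsto f_i\odot\beta'$) is sound.

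The one substantive gap is the one you yourself flag, and it is worth stating plainly that the paper does not resolve it either: the cross term $\frac{\sin\|e_i\|}{\|e_i\|}\langle e_i, w_i(\beta')\rangle$ is a \emph{nonlinear} functional of $e_i$, so $\mathbb{E}(\varepsilon_i)=0$ (hence $\mathbb{E}(e_i)=0$ by linearity of parallel transport) does not by itself kill its expectation; one needs an additional condition such as symmetry $e_i \overset{d}{=} -e_i$, or independence of $\|e_i\|$ from the direction $e_i/\|e_i\|$ together with a mean-zero direction, or smallness of the errors so that $\sin\|e_i\|/\|e_i\|\approx 1$ and the first-moment condition suffices up to higher-order terms. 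This is the exact analogue of the exogeneity condition in least squares, and without it identification can fail, since a nonzero cross term could in principle drag the population minimizer away from $\beta$. If you add such a condition explicitly (and fix a concrete metric on $\Gamma_\beta$, e.g.\ the $\mathbb{L}^2$-distance between square-root slope functions used elsewhere in the paper, together with a compact smoothness class), your outline compiles into a complete proof that is strictly more careful than the published one.
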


\subsection{Algorithm}
\label{sub:algorithm}

The space of warping functions, $\Gamma_{\beta}$, is not a linear space. Thus, one cannot estimate $\beta$ via kernel representations as what is widely used for functional regressions in the reproducing kernel Hilbert space.
To estimate a warping function in the curve registration problem, \citet{ramsay1998curve} proposed to estimate through the smooth monotone transformations. In this study, the same idea is employed and a similar estimating approach is proposed. In addition to being strictly increasing, it is assumed that the warping function, $\beta$, has an integrable second-order derivative. Then $\beta$ can be described by the following homogeneous linear differential equation,
\begin{equation}\label{eq:w_def}
  \mathcal{D}^{2}\beta=w~\mathcal{D}\beta,
\end{equation}
where $\mathcal{D}$ is the derivative operator, $w$ is the weight function. Under the boundary conditions subject to $\beta(0)=0$ and $\beta(1)=1$, the solution is
\begin{equation}\label{eq:beta_homosolution}
  \beta(\omega)=C_{1}\int_{0}^{\omega}\exp\{W(s)\}~\mathrm{d}s,
\end{equation}
where $W(s)=\int_{0}^{s}w(u)~\mathrm{d}u$ and $C_{1}=\left(\int_{0}^{1}\exp\{W(s)\}~\mathrm{d}s\right)^{-1}$.
Assuming $w(\omega)$ is in a reproducing kernel Hilbert space, a penalized estimator of $w$ is proposed as
\begin{equation}
  \hat{w}(\omega)=\underset{w}{\arg\min}~\frac{1}{n}\sum_{i=1}^{n}H^{2}(g_{i},f_{i}\odot\beta)+\lambda\int w^{2}(\omega)~\mathrm{d}\omega,
\end{equation}
where by the Representer Theorem, $w(\omega)$ can be represented by a finite linear combination of reproducing kernel functions $\{\phi_{k}(\omega)\}$ on $[0,1]$,
\begin{equation}
  w(\omega)=\sum_{k=0}^{K}\alpha_{k}\phi_{k}(\omega),
\end{equation}
and $\alpha_{k}$'s are the coefficients. 

The estimation procedure is summarized in Algorithm~\ref{alg:beta_opt}. In Step 1, two tuning parameters, $K$ and $\lambda$, are pre-specified. To choose the values, procedures such as cross-validation can be employed. For $\lambda$, through multiple applications, it is found that values between $10^{-4}$ and $10^{-2}$ can attain a good performance~\citep{ramsay1998curve}. Step 2 yields the estimate of the warping function using the formulation in~\eqref{eq:beta_homosolution}.

\begin{algorithm}
  \caption{\label{alg:beta_opt}Estimate the warping function, $\beta$, via the smooth monotone transformation.}
  \begin{description}
    \item[Step 0] Let $\{f_{i}\}_{i=1}^{n}$ and $\{g_{i}\}_{i=1}^{n}$ denote the predictor and outcome density functions, respectively.
    \item[Step 1] With a set of chosen basis functions, a choice of the number of basis ($K$), and the tuning parameter $\lambda$, estimate $\{\alpha_{k}\}$ by solving the following optimization problem:
      \begin{equation}
        \underset{\{\alpha_{k}\}}{\text{minimize}}~\frac{1}{2n}\sum_{i=1}^{n}\int\left(\sqrt{g_{i}(\omega)}-\sqrt{f_{i}(\beta(\omega))\beta'(\omega)}\right)^{2}~\mathrm{d}\omega+\lambda\int w^{2}(\omega)~\mathrm{d}\omega,
      \end{equation}
    where $\beta(\omega)$ is represented by $w(\omega)$ via~\eqref{eq:beta_homosolution}.
    \item[Step 2] With the estimate, $\{\hat{\alpha}_{k}\}$, one can obtain the estimate of $\beta(\omega)$ as
      \begin{equation}
        \hat{\beta}(\omega)=\hat{C}_{1}\int_{0}^{\omega}\prod_{k}\exp\left\{\hat{\alpha}_{k}\int_{0}^{s}\phi_{k}(u)~\mathrm{d}u\right\}~\mathrm{d}s,
      \end{equation}
      where $\hat{C}_{1}=\left(\int_{0}^{1}\exp\{\hat{W}(s)\}~\mathrm{d}s\right)^{-1}$, $\hat{W}(s)=\int_{0}^{s}\hat{w}(u)~\mathrm{d}u$, and $\hat{w}(\omega)=\sum_{k=0}^{K}\hat{\alpha}_{k}\phi_{k}(\omega)$.
  \end{description}
\end{algorithm}

\subsection{Inference}
\label{sub:inference}

In this section, we suggest an inference strategy based on the algorithm introduced in Section~\ref{sub:algorithm}. As discussed in Section~\ref{appendix:sub:Rstructure_Gamma} of the supplementary materials, the space of the warping function, $\Gamma_{\beta}$, is not a linear space. Thus, we first focus on the asymptotic properties of the weight function, $w$, defined in~\eqref{eq:w_def}.

\begin{theorem}\label{thm:w_asmp}
  Let $f_{i}$ and $g_{i}$ be the predictor and outcome densities, respectively, for $i=1,\dots,n$. Denote $\hat{w}$ as the estimator of $w$ that minimizes the Hellinger distance defined in~\eqref{eq:beta_estimator} using representation~\eqref{eq:beta_homosolution}. Under Assumptions (A1)--(A3) (in Section~\ref{appendix:sub:proof_thm_w_asmp} of the supplementary materials), for $\omega\in[0,1]$,
  \begin{equation}
    \frac{\hat{w}(\omega)-w(\omega)}{\sqrt{C_{n}(\omega)}}\overset{\mathcal{D}}{\longrightarrow}\mathcal{N}(0,1),
  \end{equation}
  where
  \[
    C_{n}(\omega)=\frac{1}{n}A_{n}^{-1}(\omega)B_{n}(\omega)A_{n}^{-1}(\omega), \quad A_{n}(\omega)=\frac{1}{n}\sum_{i=1}^{n}\frac{\partial^{2} L_{i}}{\partial w^{2}}(\omega), \quad B_{n}(\omega)=\frac{1}{n}\sum_{i=1}^{n}\mathrm{Var}\left(\frac{\partial L_{i}}{\partial w}(\omega)\right),
  \]
  \[
    L_{i}(\beta)=\frac{1}{2}\int\left(\sqrt{g_{i}(\omega)}-\sqrt{f_{i}(\beta(\omega))\beta'(\omega)}\right)^{2}~\mathrm{d}\omega,
  \]
  and $\overset{\mathcal{D}}{\longrightarrow}$ denotes convergence in distribution.
\end{theorem}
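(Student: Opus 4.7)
The plan is to establish asymptotic normality by the standard M-estimator route: expand the first-order optimality condition for $\hat w$ around the truth, invert the Hessian, and apply a central limit theorem to the score. Theorem~\ref{thm:beta_asmp} already supplies the needed consistency $\hat\beta\overset{\mathcal{P}}{\to}\beta$, and through the one-to-one correspondence in~\eqref{eq:w_def}--\eqref{eq:beta_homosolution} this transfers to $\hat w\overset{\mathcal{P}}{\to}w$, which anchors the expansion and lets us localize to a shrinking neighborhood of the truth.

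First, I would write down the first-order condition. Setting $M_n(w)=\frac{1}{n}\sum_{i=1}^n L_i(\beta[w])$, where $\beta[w]$ denotes the warping function constructed from $w$ via~\eqref{eq:beta_homosolution}, differentiation with respect to $w$ gives $\frac{\partial M_n}{\partial w}(\omega)\big|_{\hat w}=0$. Under Assumptions (A1)--(A3) the loss is twice differentiable in $w$ on a neighborhood of the truth, so a second-order Taylor expansion yields
\begin{equation}
0=\frac{1}{n}\sum_{i=1}^{n}\frac{\partial L_i}{\partial w}(\omega)\bigg|_{w}+\left(\frac{1}{n}\sum_{i=1}^{n}\frac{\partial^2 L_i}{\partial w^2}(\omega)\bigg|_{\tilde w}\right)\bigl(\hat w(\omega)-w(\omega)\bigr),
\end{equation}
for some $\tilde w$ on the segment between $\hat w$ and $w$. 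Solving for $\hat w(\omega)-w(\omega)$ produces the sandwich structure with $A_n(\omega)$ on the outside and the score average on the inside, matching the form of $C_n(\omega)$.

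Next, I would handle the two statistical pieces. For the score, under Model~\eqref{eq:dens_model} the expectation of $\partial L_i/\partial w(\omega)|_{w}$ vanishes because $q_i-p_i$ equals (to leading order along the geodesic) the tangent vector $e_i$ whose parallel transport $\varepsilon_i$ has mean zero, and the derivative of $\sqrt{f_i\odot\beta}$ with respect to $w$ lies along the corresponding tangent direction; combined with $\mathrm{Var}(\varepsilon_i)<\infty$, this gives a Lindeberg-type condition and $\frac{1}{\sqrt n}\sum_i \partial L_i/\partial w(\omega)|_{w}\overset{\mathcal{D}}{\longrightarrow}\mathcal N(0,B_\infty(\omega))$ where $B_\infty(\omega)=\lim_n B_n(\omega)$. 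For the Hessian, $A_n(\omega)|_{\tilde w}$ can be replaced by $A_n(\omega)|_{w}$ using the consistency of $\hat w$ together with continuity of $\partial^2 L_i/\partial w^2$ in $w$ near the truth (Assumption A3) and a uniform law of large numbers; Slutsky's lemma then yields the claimed limit after normalizing by $\sqrt{C_n(\omega)}$.

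The main obstacle will be the rigorous Taylor expansion of $M_n$ in an infinite-dimensional parameter $w$ and its reduction to a pointwise-in-$\omega$ statement. Because $\beta[w]$ depends on $w$ through the nonlinear map in~\eqref{eq:beta_homosolution} — an exponential of an integral, renormalized by $C_1$ — computing $\partial L_i/\partial w$ and $\partial^2 L_i/\partial w^2$ requires careful Fr\'echet derivatives of $\beta$, $\beta'$, and the normalization constant, and controlling the remainder term demands uniform bounds on these derivatives over the shrinking neighborhood of $w$ guaranteed by consistency. In practice, the finite basis representation $w(\omega)=\sum_{k=0}^K\alpha_k\phi_k(\omega)$ used in Algorithm~\ref{alg:beta_opt} converts this into a finite-dimensional problem for $\{\alpha_k\}$, at which point standard M-estimator machinery applies directly; I expect Assumptions (A1)--(A3) to be precisely the smoothness, moment, and identifiability conditions needed to invoke that machinery and to match the asymptotic variance with $C_n(\omega)$.
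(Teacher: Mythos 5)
Your proposal follows essentially the same route as the paper: the paper likewise treats $\hat w$ as a pointwise $M$-estimator and obtains the sandwich form $C_n(\omega)=\tfrac{1}{n}A_n^{-1}(\omega)B_n(\omega)A_n^{-1}(\omega)$ from the standard score-expansion argument, extended to the functional/RKHS setting by appeal to \citet{cox1983asymptotics} under Assumptions (A1)--(A3). Your write-up is in fact more explicit than the paper's own discussion, which does not carry out the Taylor expansion, the mean-zero verification of the score, or the control of the Fr\'echet derivatives of $\beta[w]$ that you correctly identify as the main technical burden.
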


Theorem~\ref{thm:w_asmp} derives the pointwise asymptotic distribution of the estimator of the $w$ function. 
The asymptotic distribution is derived based on the theoretical results of an $M$-estimator extended to functional data~\citep{cox1983asymptotics}. For the warping function, $\beta(\omega)$, a pointwise confidence interval can be then constructed using the formula in~\eqref{eq:beta_homosolution}.


\section{Simulation Study}
\label{sec:sim}

In this section, the performance of the proposed density-on-density regression model is evaluated via simulation studies. In the studies, the density function of $\mathrm{Beta}(2,5)$ distribution is set to be the predictor density ($f_{i}$). A convex shape of $\beta$ function (similar to $\beta(\omega)$ in Figure~\ref{fig:warping_eg1}) is considered. The outcome density function ($g_{i}$) is then generated following model~\eqref{eq:dens_model}, where the error function is generated from the tangent function of $p_{i}=\sqrt{f_{i}\odot\beta}$ multiplied by a multiplier generated from a uniform distribution with mean zero. 
Figure~\ref{fig:sim_data} shows the density function of $f_{i}(\omega)$ ($\mathrm{Beta}(2,5)$) and the generated density functions of $g_{i}(\omega)$ for $n=100$ random samples.
Four methods are considered to compare the performance. (1) The proposed density-on-density regression using the true density function, denoted as DoDR-True. (2) The proposed density-on-density regression using the kernel estimator of the density function from observations generated from the density functions, denoted as DoDR-Est. (3) The Wasserstein regression introduced by \citet{chen2021wassersteinreg}, denoted as CLM. (4) A distribution-on-distribution regression approach via optimal transport maps by \citet{ghodrati2021distribution}, denoted as GP. 
In (1) and (2), a $B$-spline fitting is considered with $K=4$ basis functions. The tuning parameter, $\lambda$, is chosen based on $5$-fold cross-validation.
For (1) and (2), an estimate of the warping function, $\beta(\omega)$, is obtained. To evaluate the performance, the distance between the estimate and the truth is calculated from the square-root slope transformation (see a discussion in Section~\ref{appendix:sub:Rstructure_Gamma} of the supplementary materials). For all four approaches, a fitted density function can be acquired for each sample. The Hellinger distance between the density function and the fitted density function is then utilized to compare the proposed approaches to approaches (3) and (4).
Multiple sample size combinations are considered with $n=50,100,500$ and $m_{i1}=m_{i2}=50,100,500$, where $m_{i1}$ and $m_{i2}$ are the number of observations generated from $f_{i}$ and $g_{i}$, respectively, in approach (2). Simulations are repeated for $200$ replications.

TThe performance of the proposed approach, (1) and (2), is first examined. Using the true density functions, Figure~\ref{fig:sim_betaEst_true} presents the estimated $\beta(\omega)$ function from $n=100$ samples. 
The estimate (black solid line) is very close to the truth (red dashed line). In practice, the density functions need to be estimated from the observed data first. Figure~\ref{fig:sim_data_est} presents the estimated density functions from $m_{i1}=m_{i2}=100$ observations and Figure~\ref{fig:sim_betaEst_est} presents the estimated $\beta(\omega)$ function using the estimated densities. Compared to the results from the true density functions, the estimation variation is higher. In both Figure~\ref{fig:sim_data} and Figure~\ref{fig:sim_data_est}, the red dashed lines are the fitted density functions using the estimated warping function. Via warping, the proposed approach shifts the center and shape of the density functions. Table~\ref{table:sim_beta} presents the distance between the estimate and the true $\beta(\omega)$ function. As the number of observations ($m_{i1},m_{i2}$) and the number of samples ($n$) increase, the distance and standard error of DoDR-Est decrease. For DoDR-True, as $n$ increases, the performance improves with lower distance and standard error; while as $m_{i1}$ and $m_{i2}$ increase, the performance almost remains the same. This is expected as the true density functions are used for estimation. 
Table~\ref{table:sim_pred} presents the average Hellinger distance between the outcome density function and the fitted density function with a sample size of $n=100$ and $m_{i1}=m_{i2}=100$. The proposed DoDR-True yields the lowest distance followed by DoDR-Est. The average distance from CLM and GP approaches is much higher. 
The CLM approach was designed to perform regression under the Wasserstein metric and the GP approach made an extension by replacing the notion of expectation with a Wasserstein-Fr{\'e}chet mean based on the theory of optimal transport. However, under the Wasserstein metric, the isometry of warping density functions does not hold (see Section~\ref{appendix:sub:Wasserstein} of the supplementary materials) leading to less desired performance when data are generated from the proposed model.
Here, it should be noted that both CLM and GP assumed different data generating mechanisms from the proposal~\eqref{eq:dens_model} and the objective is to construct regressions to guarantee closeness in the sense of Wasserstein distance. Thus, these two approaches are not directly comparable to ours. The proposed approach offers an alternative for distributional regression and shall not be interpreted as an improvement over CLM and GP.

\begin{figure}
  \begin{center}
    \subfloat[\label{fig:sim_data}]{\includegraphics[width=0.45\textwidth]{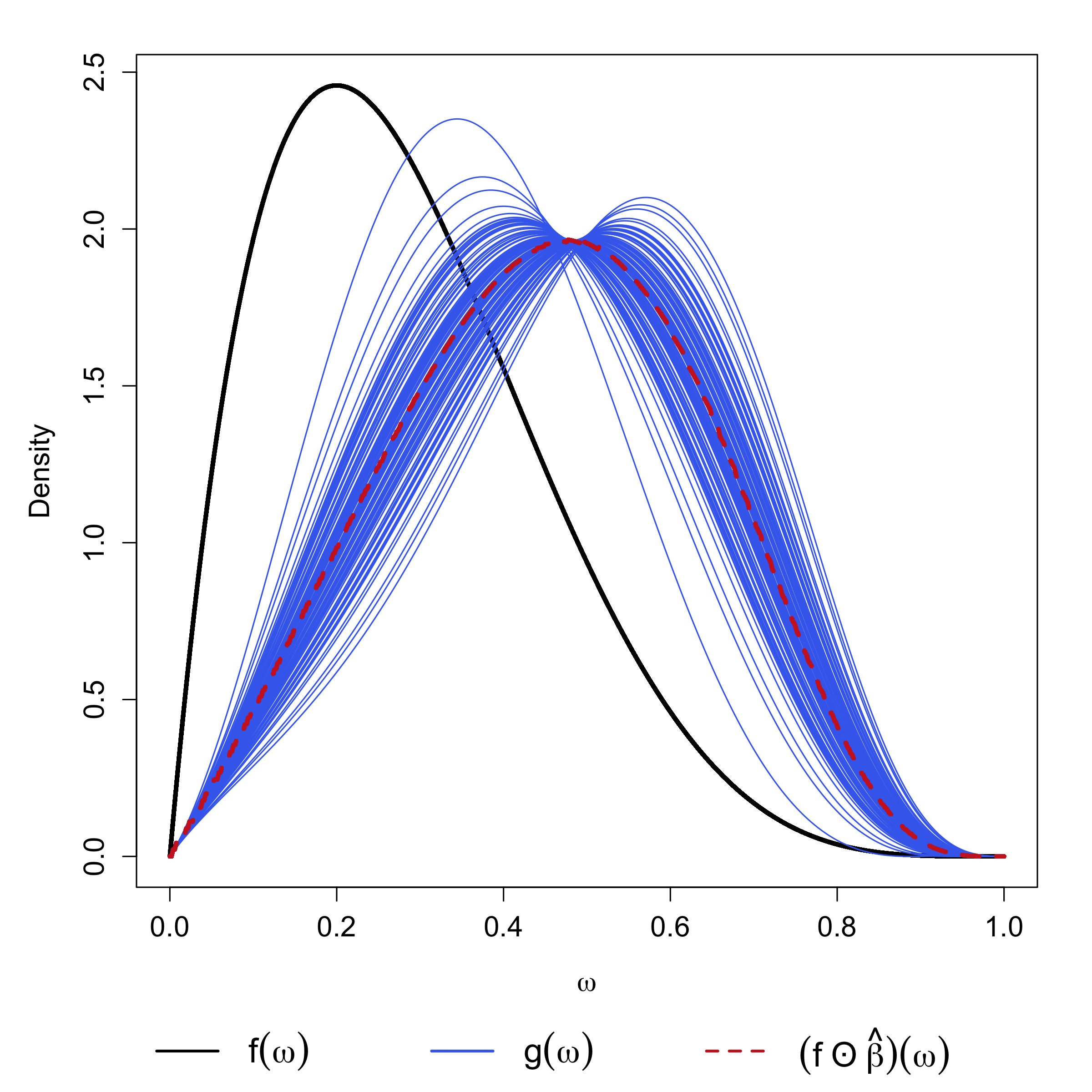}}
    \enskip{}
    \subfloat[\label{fig:sim_betaEst_true}]{\includegraphics[width=0.45\textwidth]{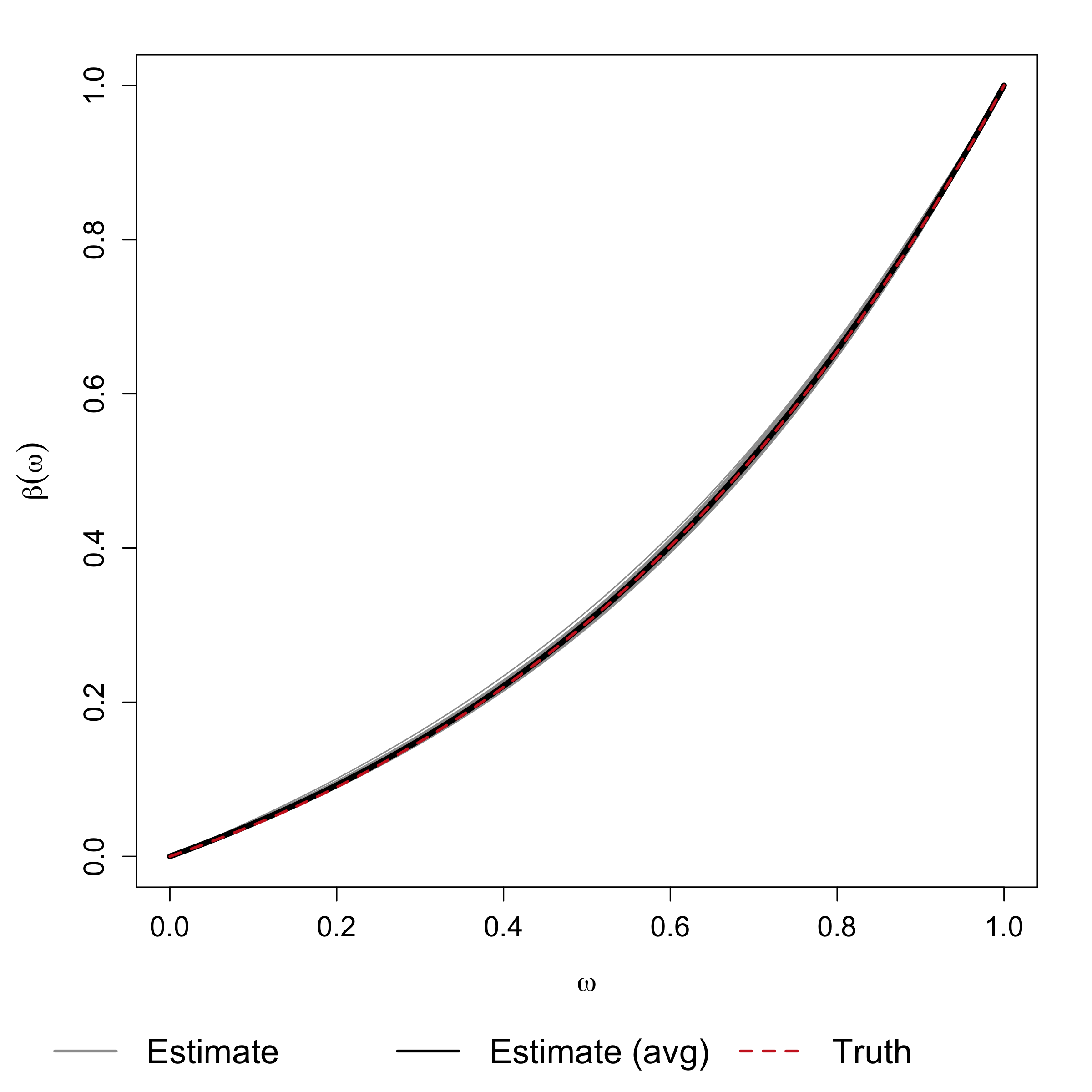}}

    \subfloat[\label{fig:sim_data_est}]{\includegraphics[width=0.45\textwidth]{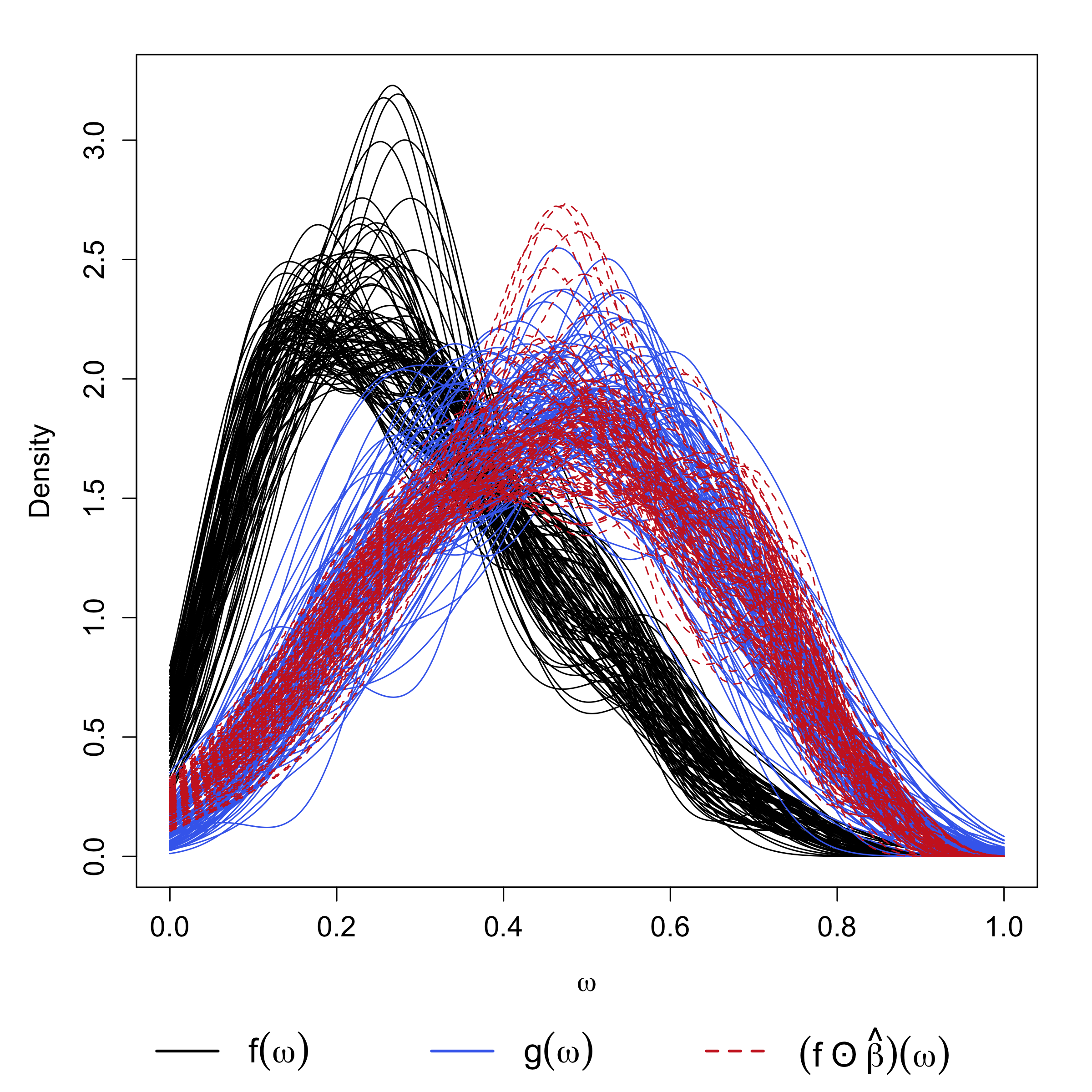}}
    \enskip{}
    \subfloat[\label{fig:sim_betaEst_est}]{\includegraphics[width=0.45\textwidth]{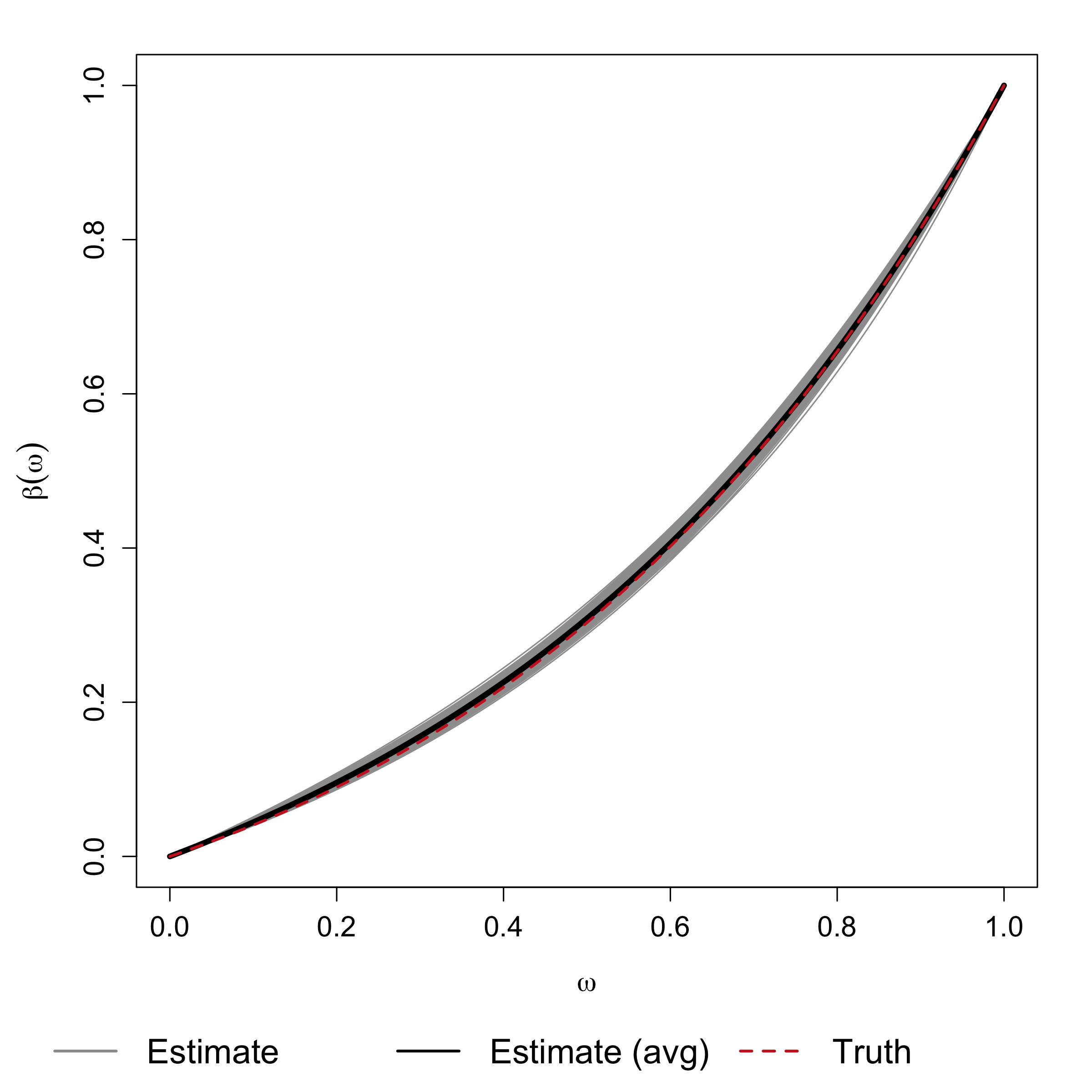}}
  \end{center}
  \caption{\label{fig:sim_data_dens}(a) Density functions $f$ and $g$ and the fitted density function using the estimated $\beta(\omega)$ from DoDR-True with $n=100$ samples. (b) Estimated $\beta(\omega)$ function using the true density functions (DoDR-True) of $200$ replications with a sample size of $n=100$. (c) Estimated density functions of $f$ and $g$ from $m_{i1}=m_{i2}=100$ observations and the fitted density functions using the estimated $\beta(\omega)$ from DoDR-Est with $n=100$ samples. (d) Estimated $\beta(\omega)$ function using the estimated density functions (DoDR-Est) of $200$ replications with a sample size of $n=100$ and $m_{i1}=m_{i2}=100$.}
\end{figure}
\begin{table}
  \caption{\label{table:sim_beta}Average distance between the estimated $\beta$ and the truth ($d(\hat{\beta},\beta)$) and the standard error (SE) over $200$ simulations in the simulation study.}
  \begin{center}
    \begin{tabular}{l l r r c r r c r r}
      \hline
      & & \multicolumn{2}{c}{$m_{i1}=m_{i2}=50$} && \multicolumn{2}{c}{$m_{i1}=m_{i2}=100$} && \multicolumn{2}{c}{$m_{i1}=m_{i2}=500$} \\
      \cline{3-4}\cline{6-7}\cline{9-10}
      & & \multicolumn{1}{c}{$d(\hat{\beta},\beta)$} & \multicolumn{1}{c}{SE} && \multicolumn{1}{c}{$d(\hat{\beta},\beta)$} & \multicolumn{1}{c}{SE} && \multicolumn{1}{c}{$d(\hat{\beta},\beta)$} & \multicolumn{1}{c}{SE} \\
      \hline
      & DR-True & $0.009$ & $0.005$ && $0.008$ & $0.004$ && $0.009$ & $0.004$ \\
      \multirow{-2}{*}{$n=50$} & DR-Est & $0.020$ & $0.010$ && $0.018$ & $0.008$ && $0.014$ & $0.005$ \\
      \hline
      & DR-True & $0.008$ & $0.004$ && $0.008$ & $0.006$ && $0.008$ & $0.005$ \\
      \multirow{-2}{*}{$n=100$} & DR-Est & $0.016$ & $0.007$ && $0.015$ & $0.006$ && $0.013$ & $0.005$ \\
      \hline
      & DR-True & $0.008$ & $0.005$ && $0.008$ & $0.004$ && $0.008$ & $0.005$ \\
      \multirow{-2}{*}{$n=500$} & DR-Est & $0.014$ & $0.006$ && $0.012$ & $0.005$ && $0.010$ & $0.004$ \\
      \hline
    \end{tabular}
  \end{center}
\end{table}
\begin{table}
  \caption{\label{table:sim_pred}Average Hellinger distance between the outcome density function ($g_{i}$) and the predicted density function ($\hat{g}_{i}=f_{i}\odot \hat{\beta})$ with a sample size of $n=100$ and $m_{i1}=m_{i2}=100$ in the simulation study. The calculation is an average over $200$ simulations. SE: standard error.}
  \begin{center}
    \begin{tabular}{l r r r r}
      \hline
      & \multicolumn{1}{c}{DR-True} & \multicolumn{1}{c}{DR-Est} & \multicolumn{1}{c}{CLM} & \multicolumn{1}{c}{GP} \\
      \hline
      $\bar{H}(g_{i},\hat{g}_{i})$ (SE) & $0.041$ $(0.003)$ & $0.099$ $(0.004)$ & $0.364$ $(0.003)$ & $0.775$ $(0.004)$ \\
      \hline
    \end{tabular}
  \end{center}
\end{table}


\section{The Alzheimer's Disease Neuroimaging Initiative Study}
\label{sec:adni}

We apply the proposed approach to data collected by the Alzheimer's Disease Neuroimaging Initiative (ADNI, \url{adni.loni.usc.edu}). 
The ADNI study was launched in 2003 as a public-private partnership. The primary goal is to test whether serial magnetic resonance imaging (MRI), positron emission tomography (PET), other biological markers, and clinical and neuropsychological assessments can be combined to measure the progression of mild cognitive impairment (MCI) and early AD. 
With data collected from different biological modalities, it enables the investigation of underlying complex interrelated mechanisms. 
As discussed in Section~\ref{sec:intro}, the proposed approach is implemented to investigate the association between the intensity distribution ($f_{i}$) of $m_{i1}=320$ peptides annotated from $142$ proteins in the cerebrospinal fluid (CSF) and the volume distribution ($g_{i}$) of $m_{i2}=135$ brain regions of interest (ROIs).

The CSF proteomics data were acquired using the technique of targeted liquid chromatography multiple reaction monitoring mass spectrometry. Based on the existing knowledge of AD, a list of protein fragments (or peptides) was sent to the detector. After processing steps of peak integration, outliers detection, normalization, quantification, and quality control using test/retest samples, intensities of the $320$ peptides were obtained for each unit. Proper data transformation, such as logarithmic transformation, was applied for data analysis. The brain imaging data were acquired using anatomical MRI. Following a standard pipeline, images were preprocessed and mapped to an atlas of $135$ ROIs spanning the entire brain to extract the volumetric measures~\citep{doshi2016muse}. Before performing analysis, the volume of each region was normalized by the total intracranial volume to remove the effect due to the variation in individual brain size.
We apply the proposed density-on-density regression on cognitive normal (CN) subjects ($n=86$), subjects diagnosed with MCI ($n=135$) and AD ($n=66$), separately. Both the proteomics and volumetric data are recentered and rescaled to the interval of $[0,1]$. 

Figure~\ref{fig:adni_beta} presents the estimated warping function and the $95\%$ point-wise confidence interval for each diagnostic group. For CN subjects, the estimated warping function is very close to the identity function with a slight concave curvature suggesting that the density function of the brain volumes slightly skews to the right compared to the density function of peptide intensities in the CSF. For MCI subjects, the $95\%$ point-wise confidence interval of the estimated warping function covers the identify function with $\omega\in(0.2,0.7)$. At the tail values of both sides, the estimated warping function deviates from identity toward right-skewness. The estimated warping function from AD subjects yields the highest concave curvature suggesting the largest divergence between the two distributions.
For each diagnostic group, one subject is chosen and data distributions, as well as the predicted distribution of brain volumes, are plotted in Figure~\ref{fig:adni_density}. From the figures, the estimated warping function aligns the distribution of peptide intensities to the distribution of brain volumes. Among the healthy aging population, the two distributions are very much like each other with the volumetric distribution slightly skewed to the right. Among subjects diagnosed with MCI or AD, the skewness of the volumetric distribution is greater. 
As a prodromal stage of AD, atrophy in the medial temporal lobe, including the hippocampus and entorhinal cortex, has been consistently observed in MCI. However, it only accounts for a relatively small portion of whole brain loss~\citep{tabatabaei2015cerebral}. When developed AD, topographical progression of cortical atrophy has been observed following a temporal--parietal--frontal trajectory while motor areas until late stages of the disease~\citep{pini2016brain}. Thus, a distribution with greater skewness is observed among AD compared to MCI.
Current existing analytical approaches focus on the identification of protein/volumetric markers for AD and the investigation of the associations between the markers from the two modalities. The proposed approach offers a way of studying the association between the distribution densities of these two types of biological assessments.

\begin{figure}
  \begin{center}
    \includegraphics[width=0.6\textwidth]{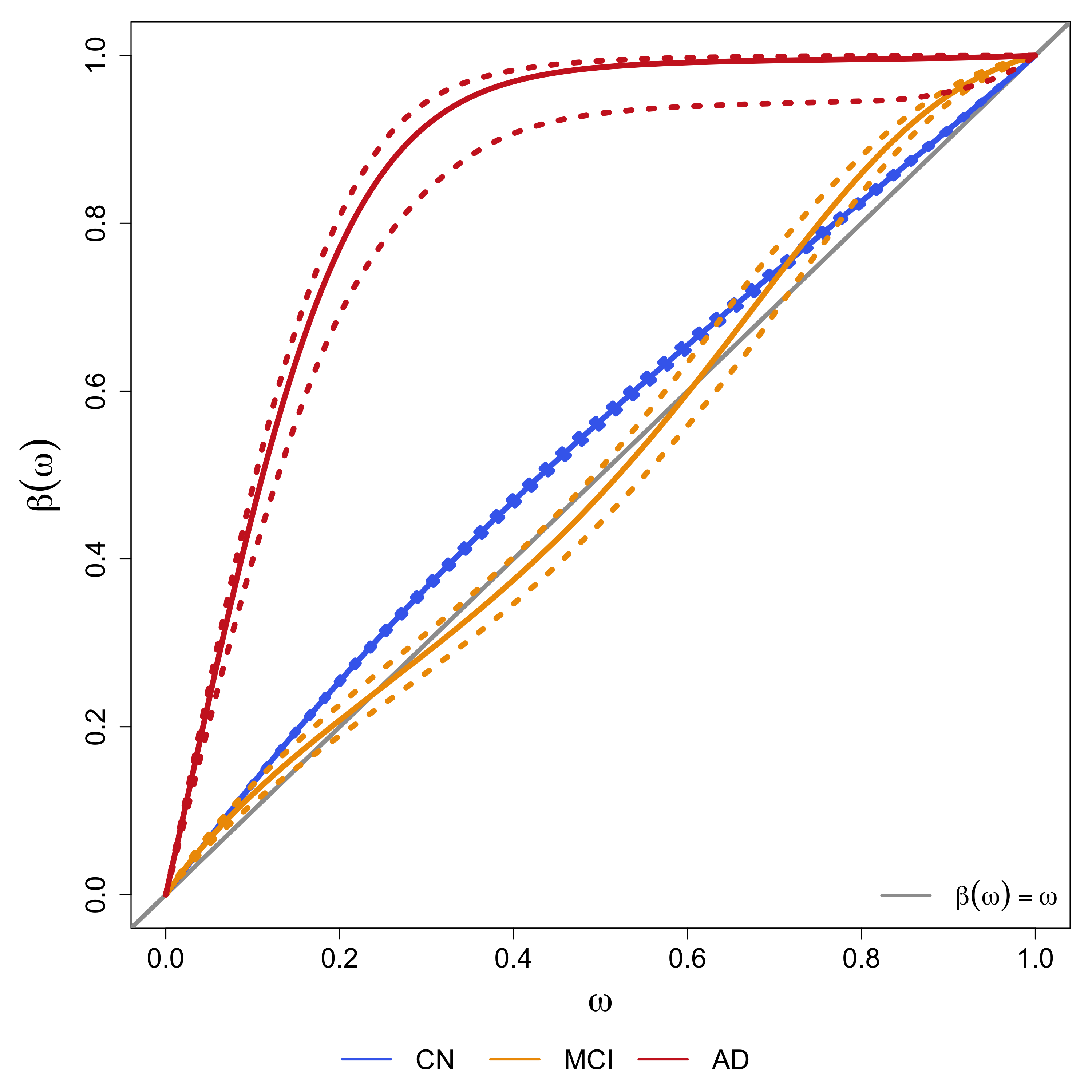}
  \end{center}
  \caption{\label{fig:adni_beta}Estimated warning function (solid lines) and the $95\%$ point-wise confidence interval (dashed lines) in the ADNI proteomic-imaging study. The estimation is conducted for CN (blue), MCI (yellow), and AD (red), separately.}
\end{figure}
\begin{figure}
  \begin{center}
    \subfloat[\label{subfig:adni_dens_CN}CN]{\includegraphics[width=0.3\textwidth]{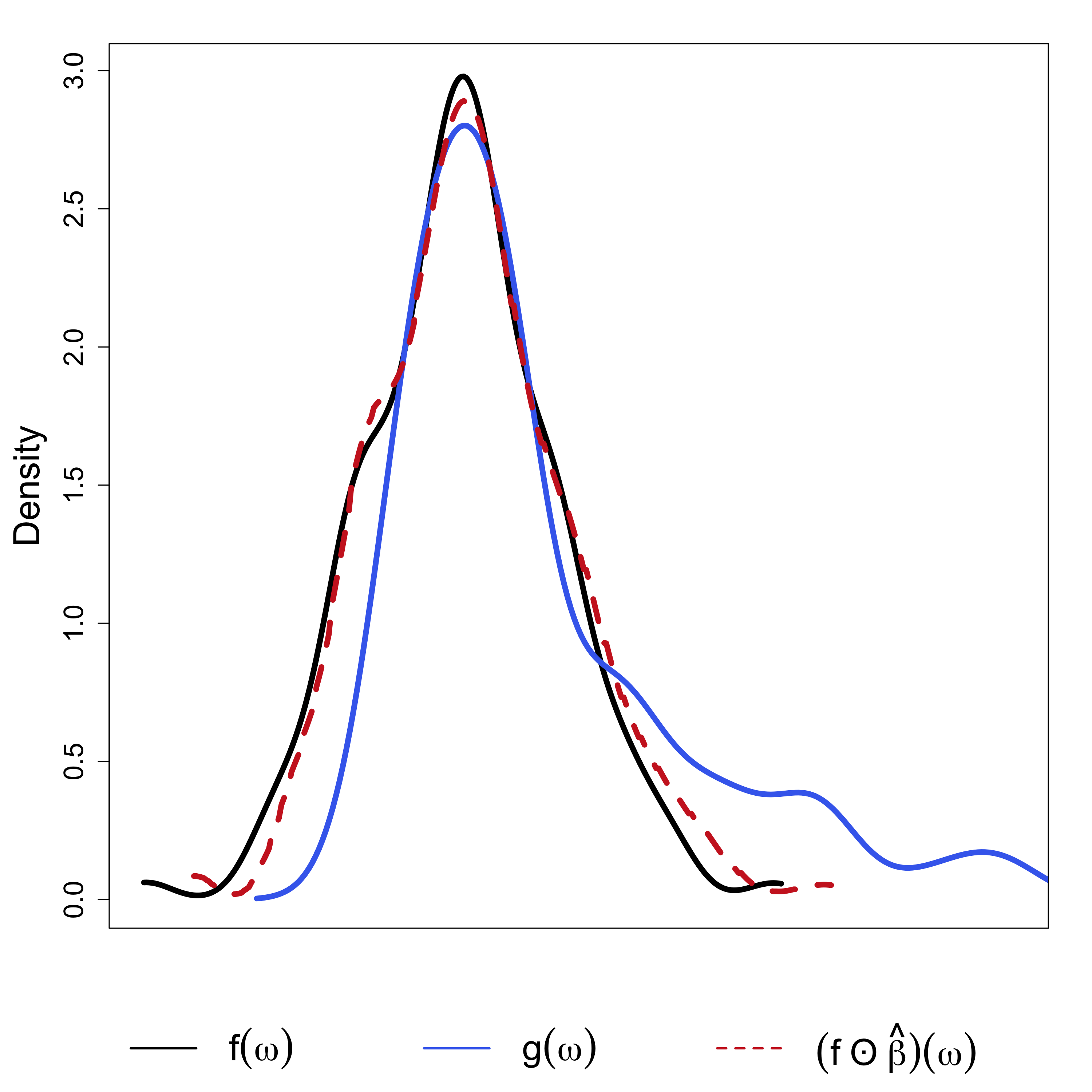}}
    \enskip{}
    \subfloat[\label{subfig:adni_dens_MCI}MCI]{\includegraphics[width=0.3\textwidth]{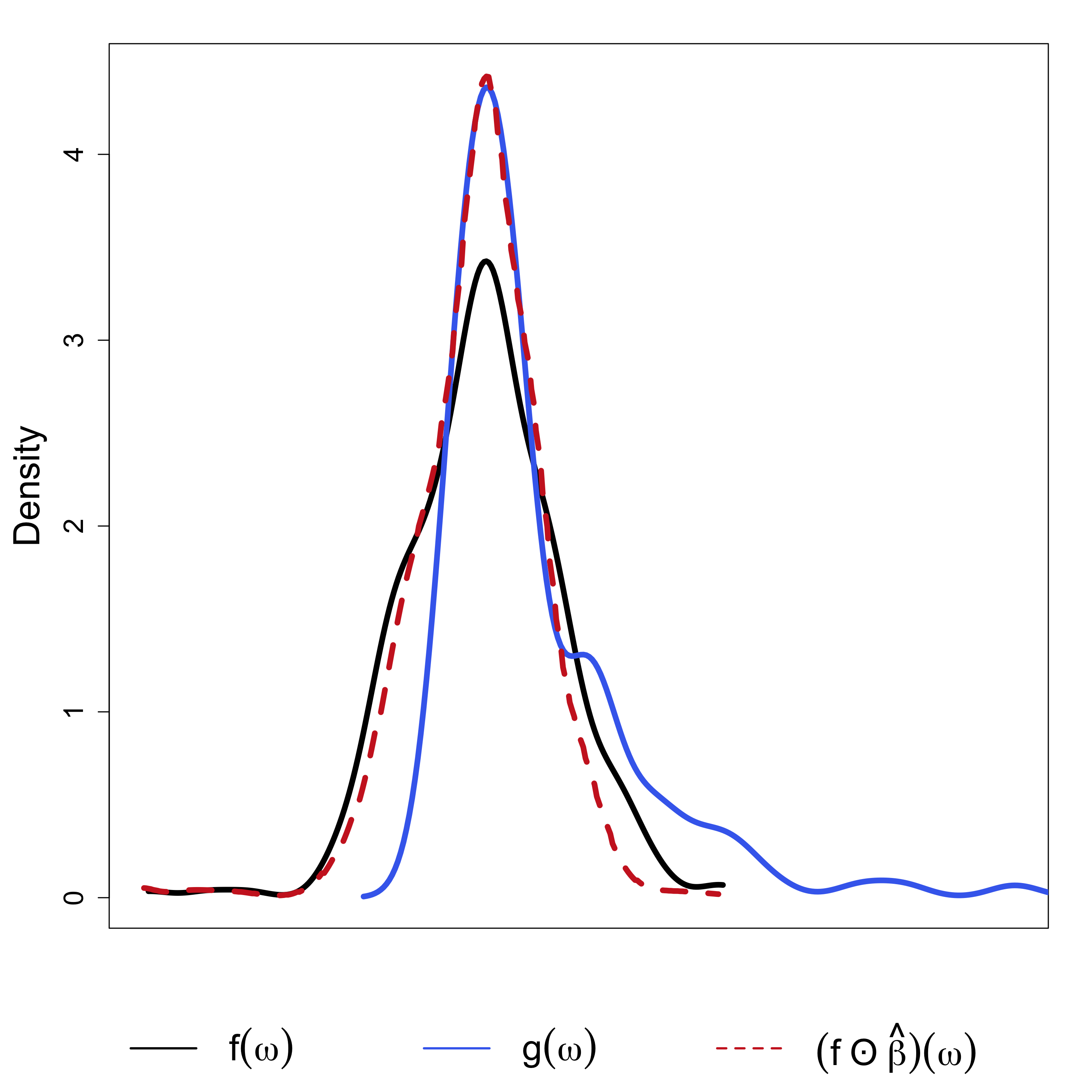}}
    \enskip{}
    \subfloat[\label{subfig:adni_dens_AD}AD]{\includegraphics[width=0.3\textwidth]{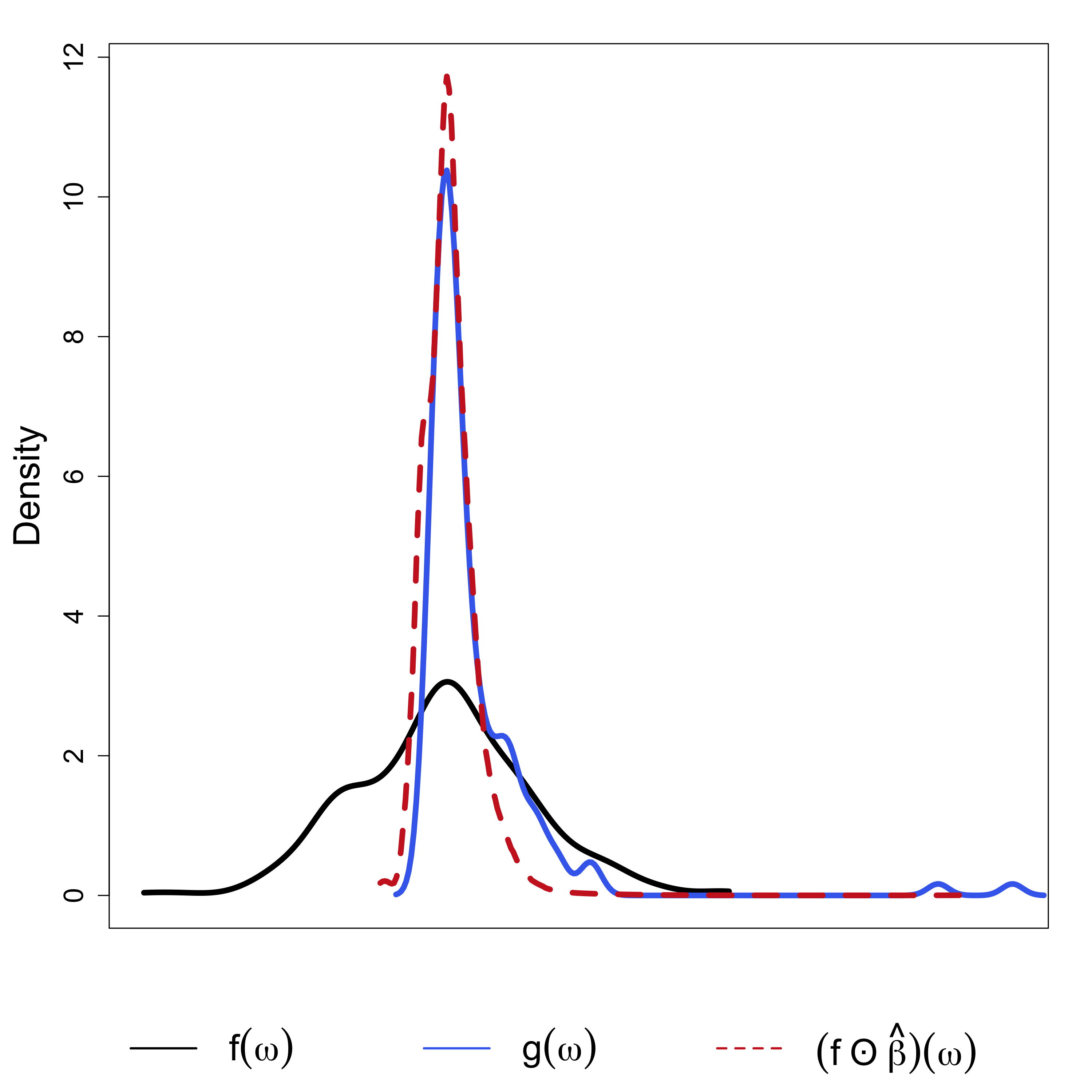}}
  \end{center}
  \caption{\label{fig:adni_density}Density function of the proteomics data ($f(\omega)$, black solid line), brain volumetric data ($g(\omega)$, blue solid line), and fitted density of brain volumetric data ($\hat{g}(\omega)$, red dashed line).}
\end{figure}

\section{Discussion}
\label{sec:discussion}

In this study, a density-on-density regression model is introduced, where the association between densities is elucidated via a warping function. The proposed model offers an alternative for distributional regression and has the advantage of a straightforward demonstration of how one density transforms into another. Using the Riemannian representation of density functions, that is the square-root function (or half density), the model is defined in the correspondingly constructed Riemannian manifold. To estimate the warping function, it is proposed to minimize the average Hellinger distance which also minimizes the average Fisher-Rao distance between densities. An optimization algorithm is introduced by estimating the smooth monotone transformation of the warping function. Asymptotic properties of the proposed estimator are discussed. Simulation studies demonstrate the performance of the proposed approach. Applying to a proteomic-imaging study from the Alzheimer's Disease Neuroimaging Initiative (ADNI), the proposed approach illustrates the connection between the distribution of protein abundance in the cerebrospinal fluid and the distribution of brain regional volume. Discrepancies among cognitive normal subjects, patients with mild cognitive impairment, and AD are identified and the findings are in line with existing knowledge about AD. 

As an initial step to study the association between two densities via the Riemannian representation, the current study focuses on the introduction of the model and related concepts, as well as the estimating procedure. For inference, the introduced strategy assumes that the true density functions are given. It only considers the uncertainty in estimating the warping function and ignores the uncertainty in estimating the density functions from the observed data, thus underestimating the variance with a narrower point-wise confidence interval. One future direction is to take the variation of density estimation into consideration. The current inference approach constructs a point-wise confidence interval for the estimated warping function. The construction of simultaneous confidence bands is also of future research.
The introduced regression model only considers the association between two densities. No other covariates are considered. Adding additional covariates, either scalar or functional or both, is not straightforward given the current formulation, thus is a direction of future research. In the ADNI application, the proposed approach is applied to the three diagnostic groups separately. Considering a linear regression model in the Euclidean space, a naive way of studying the discrepancy in the association across groups is to include an interaction term. Generalizing to the density-on-density regression model, it is to include an interaction term between the density predictor and the scalar group indicator. Such a generalization is not straightforward and requires further investigation.
In practice, both outcome and predictor densities may have subject-specific supports that could be informative for scientific questions of interest. Incorporating and accounting for the differences between subject-specific domains will need to be addressed in the future.
One other interesting direction of research is to study and model density-level residuals that may provide additional information for goodness-of-fit and identification of outliers. Comparing the warping function that links outcome and predictor densities with different parametric and non-parametric copula constructs may provide additional insights into co-dependence between underlying distributions.

\section*{Acknowledgments}

YZ and BC were partially supported by NIH grant R01MH126970. YZ was partially supported by NIH grants P30AG072976 and U54AG065181. BC was partially supported by NIH grants R01EB029977, P41EB031771, U54DA049110, and R01EB022911. AD was partially supported by NIH grant R01ES033739.

\appendix
\counterwithin{figure}{section}
\counterwithin{table}{section}
\counterwithin{equation}{section}
\counterwithin{lemma}{section}
\counterwithin{theorem}{section}
\counterwithin{proposition}{section}




\section{Theory and Proof}
\label{appendix:sec:proof}

\subsection{Additional geometries of \texorpdfstring{$\mathbb{S}_{\infty}$}{}}
\label{appendix:sub:geometries}

Let $p_{1}$ and $p_{2}$ be two points in $\mathbb{S}_{\infty}$ with $p_{1}\neq \pm p_{2}$. Then the parallel transport map from $T_{p_{1}}(\mathbb{S}_{\infty})$ to $T_{p_{2}}(\mathbb{S}_{\infty})$ along the shortest geodesic from $p_{1}$ to $p_{2}$ is
\[
  \Psi(v)=v-\frac{2vp_{2}}{|p_{1}+p_{2}|^{2}}(p_{1}+p_{2}), \quad \text{for } v\in T_{p_{1}}(\mathbb{S}_{\infty}).
\]
Model~\eqref{eq:dens_model} considers an error $e_{i}\in T_{p_{i}}(\mathbb{S}_{\infty})$ with respect to $p_{i}$. Assuming a base point in $\mathbb{S}_{\infty}$ denoted as $p_{0}$, there exists a transported error of $e_{i}$ in the tangent space of $T_{p_{0}}(\mathbb{S}_{\infty})$, which is denoted as $\varepsilon_{i}$. It is assumed that $\mathbb{E}\varepsilon_{i}=0$. Based on the transformation above, we have that $\mathbb{E}e_{i}=0$.


\subsection{Proof of Lemma~\ref{lemma:SRF_isometry}}
\label{appendix:sub:proof_lemma_SRF_isometry}

\begin{proof}
  \begin{eqnarray*}
    d^{2}((p_{1},\beta),(p_{2},\beta)) &=& \int \|\sqrt{f_{1}(\beta(\omega))\beta'(\omega)}-\sqrt{f_{2}(\beta(\omega))\beta'(\omega)}\|^{2}~\mathrm{d}\omega \\
    &=& \int \|\sqrt{f_{1}(\beta(\omega))}-\sqrt{f_{2}(\beta(\omega))}\|^{2}~\mathrm{d}\beta(\omega) \\
    &=& d^{2}(p_{1},p_{2}).
  \end{eqnarray*}
\end{proof}

For a density function $f\in\mathscr{P}$, let $[f]=\{f\odot\beta~|~\beta\in\Gamma_{\beta}\}$ be the equivalence class of $f$ induced by $\Gamma_{\beta}$. Under the $\mathbb{L}^{2}$-metric denoted as $d(\cdot,\cdot)$, for two densities $f_{1},f_{2}\in\mathscr{P}$, $[f_{1}]$ and $[f_{2}]$ are not parallel, that is $d(f_{1},f_{2})\neq d(f_{1}\odot\beta,f_{2}\odot\beta)$, and thus can be arbitrarily close to each other \cite[the so-called pinching effect,][]{marron2015functional}. Using the square-root representation (or the half density), the following isometry property is satisfied.
Let $[p]=\{(p,\beta)~|~\beta\in\Gamma_{\beta}\}$ denote the equivalent class of $p$ induced by $\Gamma_{\beta}$. Lemma~\ref{lemma:SRF_isometry} suggests that $[p_{1}]$ and $[p_{2}]$ are parallel under the $\mathbb{L}^{2}$-metric. The collection of such equivalence classes, $[p]$, is called a \textit{quotient space}, denoted by $\mathscr{Q}=\mathbb{L}^{2}([0,1])/\Gamma_{\beta}$. Based on this isometry property, the following defines the elastic distance between two density functions.
\begin{definition}[Elastic distance]
  For any two density functions, $f_{1},f_{2}\in\mathscr{P}$, and the corresponding SRFs, $p_{1},p_{2}\in\mathbb{L}^{2}$, define the elastic distance, denoted as $d$, on the quotient space $\mathscr{Q}$ to be
  \begin{equation}
    d([p_{1}],[p_{2}])=\inf_{\beta\in\Gamma_{\beta}}\|p_{1}-(p_{2},\beta)\|.
  \end{equation}
\end{definition}
\noindent The proposed estimator in~\eqref{eq:beta_est} thus aims to minimize an analogy of the elastic distance between two sets of densities in the sense of population average. Through the SRFs, it transforms the action on $\mathscr{P}$ to an action on $\mathbb{L}^{2}$.

\subsection{Additional properties of the Fisher-Rao metric}
\label{appendix:sub:FR-metric}

\begin{lemma}
  The Fisher-Rao metric is invariant under area-preserving mapping,
  \begin{equation}
    ((\tilde{v}_{1},\tilde{v}_{2}))_{\tilde{f}}=((v_{1},v_{2}))_{f},
  \end{equation}
  where $\tilde{f}=(f\circ\beta)\beta'$ and $\tilde{v}_{k}=(v_{k}\circ\beta)\beta'$ for $k=1,2$.
\end{lemma}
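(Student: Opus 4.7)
The proof reduces to a direct computation, so the plan is to write out the left-hand side explicitly and substitute. Starting from the definition of the Fisher-Rao metric in~\eqref{eq:FRmetric}, I would write
\[
  ((\tilde v_1,\tilde v_2))_{\tilde f}
  = \int_0^1 \tilde v_1(\omega)\,\tilde v_2(\omega)\,\frac{1}{\tilde f(\omega)}\,\mathrm{d}\omega,
\]
and then plug in $\tilde v_k(\omega)=v_k(\beta(\omega))\beta'(\omega)$ and $\tilde f(\omega)=f(\beta(\omega))\beta'(\omega)$.

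The key bookkeeping step is that the two factors of $\beta'$ in the numerator (one from each $\tilde v_k$) combine with the single $\beta'$ in the denominator (from $\tilde f$) to leave exactly one leftover factor of $\beta'(\omega)$ in the integrand. Concretely, the integrand simplifies to $v_1(\beta(\omega))\,v_2(\beta(\omega))\,\beta'(\omega)/f(\beta(\omega))$. That remaining $\beta'(\omega)\,\mathrm{d}\omega$ is precisely the Jacobian needed to perform the substitution $u=\beta(\omega)$.

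Then I would invoke the change of variables $u = \beta(\omega)$. Since $\beta \in \Gamma_\beta$ is a boundary-preserving diffeomorphism with $\beta(0)=0$, $\beta(1)=1$, and $\beta$ smooth and strictly monotone, the substitution is valid and preserves the limits of integration. This converts the integral into $\int_0^1 v_1(u)\,v_2(u)\,(1/f(u))\,\mathrm{d}u = ((v_1,v_2))_f$, finishing the proof.

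There is essentially no obstacle here beyond correctly tracking the three $\beta'$ factors; the argument is a one-line change of variables once the algebraic cancellation is made explicit. The only thing to take care of is citing the properties of $\Gamma_\beta$ from Definition~\ref{def:warping} to justify the substitution (positivity of $\beta'$, so we do not pick up a sign, and the boundary values, so the limits stay at $0$ and $1$). This invariance is in fact the analogue at the metric level of the area-preserving property~(i) of the action $\odot$ noted just after Definition~\ref{def:action}, and is the infinitesimal statement underlying the isometry in Lemma~\ref{lemma:SRF_isometry}.
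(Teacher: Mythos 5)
Your computation is correct: the three factors of $\beta'$ cancel to leave exactly the Jacobian $\beta'(\omega)\,\mathrm{d}\omega$ needed for the substitution $u=\beta(\omega)$, and the positivity of $\beta'$ together with $\beta(0)=0$, $\beta(1)=1$ justifies the change of variables. The paper states this lemma without proof, but your argument is exactly the change-of-variables computation used in its proof of Lemma~\ref{lemma:SRF_isometry}, so there is nothing to add.
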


The following proposition demonstrates the infinitesimal equivalence of the Kullback-Leibler divergence to the Fisher-Rao metric.
\begin{proposition}[\cite{srivastava2016functional}]\label{prop:FR_KL}
  Let $f\in\mathscr{P}$ with $f>0$ and $v\in T_{f}(\mathscr{P})$ and denote the nonparametric Fisher-Rao metric as $((\cdot,\cdot))_{f}$. Then,
  \begin{equation}
    \lim_{\epsilon\rightarrow0}\left(\frac{\mathrm{KL}(f~\|~f+\epsilon v)}{((\epsilon v,\epsilon v))_{f}}\right)=\frac{1}{2},
  \end{equation}
  where $\mathrm{KL}(f_{1}~\|~f_{2})$ is the Kullback-Leibler (K-L) divergence of two density functions,
  \begin{equation}
    \mathrm{KL}(f_{1}~\|~f_{2})=\int f_{1}(\omega)\log\left(\frac{f_{1}(\omega)}{f_{2}(\omega)}\right)~\mathrm{d}\omega.
  \end{equation}
\end{proposition}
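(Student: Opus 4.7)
The plan is to Taylor expand the logarithm inside the KL divergence around $\epsilon=0$ and identify the leading term. Writing
\[
\mathrm{KL}(f\,\|\,f+\epsilon v)=-\int_{0}^{1} f(\omega)\log\!\left(1+\frac{\epsilon v(\omega)}{f(\omega)}\right)\mathrm{d}\omega
\]
and using $\log(1+x)=x-x^{2}/2+O(x^{3})$ with $x=\epsilon v/f$, the integrand expands to $-\epsilon v+\tfrac{\epsilon^{2}}{2}\cdot\tfrac{v^{2}}{f}+O(\epsilon^{3})$ (the $f$ multiplier in front cancels one power of $1/f$ in each term). The first-order term disappears upon integration, because $v\in T_{f}(\mathscr{P})$ forces $\int v\,\mathrm{d}\omega=0$: tangent vectors at $f$ are by construction the directions preserving the unit-integral constraint to first order. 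The surviving $\epsilon^{2}$ coefficient is $\tfrac{1}{2}\int v^{2}/f\,\mathrm{d}\omega=\tfrac{1}{2}((v,v))_{f}$, while the denominator is $((\epsilon v,\epsilon v))_{f}=\epsilon^{2}((v,v))_{f}$. Dividing and letting $\epsilon\to 0$ yields the desired limit $1/2$.

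The key steps in order are: (i) rewrite the KL integrand as $-f\log(1+\epsilon v/f)$; (ii) justify a pointwise Taylor expansion to third order valid for $\epsilon$ sufficiently small; (iii) invoke the tangency condition $\int v\,\mathrm{d}\omega=0$ to eliminate the $O(\epsilon)$ term from the integrated expansion; (iv) divide by $((\epsilon v,\epsilon v))_{f}$ and pass to the limit, showing the remainder contributes $O(\epsilon)$.

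The main obstacle is controlling the Taylor remainder globally so that the dominated convergence theorem applies. Since $f>0$ is continuous on the compact interval $[0,1]$, it is bounded below by some $m>0$. Provided $v$ is bounded, $|\epsilon v/f|\le \epsilon\|v\|_{\infty}/m<1/2$ for small $\epsilon$, and the Lagrange form of the remainder gives $|\log(1+x)-x+x^{2}/2|\le C|x|^{3}$ uniformly, so $|f\cdot O((\epsilon v/f)^{3})|$ is bounded by an integrable function independent of $\epsilon$. When $v$ is only assumed to satisfy $((v,v))_{f}<\infty$, one truncates $v$ at level $M$, applies the bounded case, and removes the truncation by exploiting finiteness of the Fisher-Rao norm to bound the tails; this treatment of the region where $\epsilon v/f$ is not uniformly small is the most delicate piece of the argument and implicitly requires a mild integrability assumption on $v$ beyond mere membership in $T_{f}(\mathscr{P})$.
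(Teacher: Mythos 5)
The paper does not prove this proposition at all: it is imported verbatim from \cite{srivastava2016functional} and used as a known fact, so there is no in-paper argument to compare against. Your direct computation is the standard derivation and it is correct: writing $\mathrm{KL}(f\,\|\,f+\epsilon v)=-\int f\log(1+\epsilon v/f)\,\mathrm{d}\omega$, expanding the logarithm, killing the $O(\epsilon)$ term via the tangency condition $\int v\,\mathrm{d}\omega=0$ (which you are right to make explicit, since the paper never formally defines $T_{f}(\mathscr{P})$), and identifying the quadratic coefficient $\tfrac{1}{2}\int v^{2}/f\,\mathrm{d}\omega=\tfrac{1}{2}((v,v))_{f}$ gives the limit immediately. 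Your handling of the remainder is also the right level of care: on $[0,1]$ a continuous $f>0$ is bounded below, so for bounded $v$ the Lagrange remainder is uniformly $O(\epsilon^{3})$ and dominated convergence applies; your observation that an unbounded $v$ with merely finite Fisher--Rao norm needs a truncation argument (and, strictly, a mild extra integrability hypothesis to make $f+\epsilon v$ a genuine nonnegative density for small $\epsilon$) is a legitimate refinement that the source reference also glosses over.
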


The following lemma transforms the Fisher-Rao metric in the space of probability distributions to the $\mathbb{L}^{2}$-metric through the SRF.
\begin{lemma}\label{lemma:FR-L2}
  Under the square-root mapping, the Fisher-Rao metric for probability densities transforms to the $\mathbb{L}^{2}$-metric up to a constant.
\end{lemma}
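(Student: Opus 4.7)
The plan is to prove the identity by a direct computation via the differential of the square-root map $\Phi:\mathscr{P}\to\mathbb{S}_{\infty}^{+}$ given by $\Phi(f)=\sqrt{f}$. Concretely, I would pick a density $f\in\mathscr{P}$ with $f>0$ and a tangent vector $v\in T_{f}(\mathscr{P})$, then compute the pushforward $\mathrm{d}\Phi_{f}(v)$ so that I can compare the Fisher-Rao inner product at $f$ with the $\mathbb{L}^{2}$ inner product at $p=\sqrt{f}$.

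First, I would represent $v$ as the velocity of a smooth curve of densities $f_{\tau}$ at $\tau=0$, with $f_{0}=f$ and $\partial_{\tau}f_{\tau}|_{\tau=0}=v$. Applying $\Phi$ to this curve yields $p_{\tau}=\sqrt{f_{\tau}}$, and differentiating under the square root gives
\begin{equation*}
u := \mathrm{d}\Phi_{f}(v) = \frac{\partial}{\partial\tau}\sqrt{f_{\tau}}\Big|_{\tau=0} = \frac{v}{2\sqrt{f}} = \frac{v}{2p}.
\end{equation*}
So a tangent vector $v$ at $f$ corresponds under the square-root mapping to $u=v/(2\sqrt{f})$ in $T_{p}(\mathbb{S}_{\infty})$. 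This is the only real computational step; once it is in place, the rest is algebraic substitution.

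Next, for two tangent vectors $v_{1},v_{2}\in T_{f}(\mathscr{P})$, with corresponding images $u_{k}=v_{k}/(2\sqrt{f})$, I would substitute into the $\mathbb{L}^{2}$ inner product on $T_{p}(\mathbb{S}_{\infty})\subset\mathbb{L}^{2}$:
\begin{equation*}
\langle u_{1},u_{2}\rangle_{\mathbb{L}^{2}} = \int_{0}^{1}\frac{v_{1}(\omega)}{2\sqrt{f(\omega)}}\cdot\frac{v_{2}(\omega)}{2\sqrt{f(\omega)}}\,\mathrm{d}\omega = \frac{1}{4}\int_{0}^{1}\frac{v_{1}(\omega)v_{2}(\omega)}{f(\omega)}\,\mathrm{d}\omega = \frac{1}{4}((v_{1},v_{2}))_{f},
\end{equation*}
using the definition of the Fisher-Rao metric from~\eqref{eq:FRmetric}. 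This immediately gives the claim with constant $1/4$: the square-root mapping sends the Fisher-Rao metric to four times the ambient $\mathbb{L}^{2}$ metric. As a consistency check, I would also verify that $u$ does indeed lie in $T_{p}(\mathbb{S}_{\infty})$, i.e., $\langle u,p\rangle=0$, which follows since $\int u\,p\,\mathrm{d}\omega = \int (v/2)\,\mathrm{d}\omega = \frac{1}{2}\partial_{\tau}\int f_{\tau}\,\mathrm{d}\omega|_{\tau=0}=0$ because $\int f_{\tau}\,\mathrm{d}\omega\equiv 1$.

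The only potential obstacle is making the argument intrinsic: the computation above treats $v$ as a function, but tangent vectors to $\mathscr{P}$ are usually identified with mean-zero perturbations, and care must be taken that the image $u=v/(2\sqrt{f})$ lies in $T_{p}(\mathbb{S}_{\infty})=\{u\in\mathbb{L}^{2}:\langle u,p\rangle=0\}$. The identity $\int u\,p\,\mathrm{d}\omega=0$ checked above handles this. A secondary, essentially clerical, point is to confirm that the pushforward on tangent vectors induced by $\Phi$ agrees with the formal derivative $v\mapsto v/(2\sqrt{f})$; this is routine provided $f>0$, which is part of the hypothesis for the Fisher-Rao metric to be defined in~\eqref{eq:FRmetric}. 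No deeper analysis is needed, and the constant $1/4$ falls out of the chain rule applied to $\sqrt{\cdot}$.
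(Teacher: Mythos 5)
Your proposal is correct and follows essentially the same route as the paper: the paper simply posits the correspondence $v_{k}=2\sqrt{f}\,w_{k}$ between tangent vectors and substitutes into the Fisher--Rao metric to obtain $((v_{1},v_{2}))_{f}=4\langle w_{1},w_{2}\rangle$, which is the reciprocal of your substitution of $u_{k}=v_{k}/(2\sqrt{f})$ into the $\mathbb{L}^{2}$ inner product. Your derivation of that correspondence via the chain rule along a curve, and your check that $\langle u,p\rangle=0$, are welcome details the paper leaves implicit.
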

\begin{proof}
  For a density function $f\in\mathscr{P}$ and $f>0$, let $p(\omega)=\sqrt{f(\omega)}$. Denote $\mathscr{S}\subset\mathbb{S}_{\infty}^{+}$ as the set of all square-root forms or half-densities of probability density functions on $[0,1]$. $p\in\mathscr{S}$ and denote the corresponding tangent space as $T_{p}(\mathscr{S})$. Assume $v_{1},v_{2}\in T_{f}(\mathscr{P})$, then $v_{k}(\omega)=2\sqrt{f(\omega)}w_{k}(\omega)$, where $w_{k}\in T_{p}(\mathscr{S})$, for $k=1,2$. The the Fisher-Rao metric is
  \begin{eqnarray*}
    ((v_{1},v_{2}))_{f} &=& \int v_{1}(\omega)v_{2}(\omega)\frac{1}{f(\omega)}~\mathrm{d}\omega \\
    &=& 4 \int \sqrt{f(\omega)}w_{1}(\omega)\sqrt{f(\omega)}w_{2}(\omega)\frac{1}{f(\omega)}~\mathrm{d}\omega \\
    &\propto& \int w_{1}(\omega)w_{2}(\omega)~\mathrm{d}\omega \\
    &=& \langle w_{1},w_{2}\rangle_{\mathscr{S}},
  \end{eqnarray*}
  where $\langle\cdot,\cdot\rangle_{\mathscr{S}}$ is the $\mathbb{L}^{2}$ metric in $\mathscr{S}$.
\end{proof}

\subsection{Connections between the Hellinger distance and the Riemannian distance}

It is easy to show the equivalence of minimizing the Hellinger distance between two densities to minimizing the Riemannian distance. Assume $f_{1}(\omega)$ and $f_{2}(\omega)$ are two density functions in $\mathscr{P}$ and $p_{1}(\omega)=\sqrt{f_{1}(\omega)}$ and $p_{2}(\omega)=\sqrt{f_{2}(\omega)}$ are the corresponding square-root functions, respectively. The Riemannian distance (or geodesic length distance) between $f_{1}$ and $f_{2}$ is
  \begin{equation*}
    d_{\mathrm{R}}(f_{1},f_{2})=\cos^{-1}\left(\int_{0}^{1}\sqrt{f_{1}(\omega)}\sqrt{f_{2}(\omega)}~\mathrm{d}\omega\right)=\cos^{-1}\left(\int_{0}^{1}p_{1}(\omega)p_{2}(\omega)~\mathrm{d}\omega\right).
  \end{equation*}
  The Hellinger distance between $f_{1}(\omega)$ and $f_{2}(\omega)$ is
  \begin{equation*}
    H^{2}(f_{1},f_{2}) = 1-\int_{0}^{1}\sqrt{f_{1}(\omega)}\sqrt{f_{2}(\omega)}~\mathrm{d}\omega=1-\int_{0}^{1}p_{1}(\omega)p_{2}(\omega)~\mathrm{d}\omega.
  \end{equation*}
In the range of $[0,\pi]$, $\cos^{-1}$ is a monotonic decreasing function. Thus, the optimizer of minimizing $d(f_{1},f_{2})$ is the same as the one that minimizes $H(f_{1},f_{2})$ for $f_{2}=f_{1}\odot\beta$.

\subsubsection{Proof of Proposition~\ref{prop:beta_Riemannian}}
\begin{proof}
  Assume that $\beta'$ is also a warping function in $\Gamma_{\beta}$ different from $\hat{\beta}$, then
  \[
    \frac{1}{n}\sum_{i=1}^{n}H^{2}(g_{i},f_{i}\odot\hat{\beta})-\frac{1}{n}\sum_{i=1}^{n}H^{2}(g_{i},f_{i}\odot\beta')=\frac{1}{n}\sum_{i=1}^{n}\left\{H^{2}(g_{i},f_{i}\odot\hat{\beta})-H^{2}(g_{i},f_{i}\odot\beta')\right\}<0.
  \]
  Based on the relationship between the Hellinger distance and the Riemannian distance discussed above,
  \[
    d_{\mathrm{R}}(g_{i},f_{i}\odot\hat{\beta})=\cos^{-1}\left\{1-H^{2}(g_{i},f_{i}\odot\hat{\beta})\right\}\triangleq \cos^{-1}x_{i},
  \]
  \[
    d_{\mathrm{R}}(g_{i},f_{i}\odot\beta')=\cos^{-1}\left\{1-H^{2}(g_{i},f_{i}\odot\beta')\right\}\triangleq \cos^{-1}x_{i}',
  \]
  where $|x_{i}|,|x_{i}'|<1$. Using the Taylor expansion of $\cos^{-1}x$ for $|x|<1$,
  \begin{eqnarray*}
    \cos^{-1}x_{i}-\cos^{-1}x_{i}' &=& \left[\frac{\pi}{2}-\left\{x_{i}+\frac{1}{2}\frac{x_{i}^{3}}{3}+\mathcal{O}(x_{i}^{5})\right\}\right]-\left[\frac{\pi}{2}-\left\{x_{i}'+\frac{1}{2}\frac{x_{i}^{\prime 3}}{3}+\mathcal{O}(x_{i}^{\prime 5})\right\}\right] \\
    &=& (x_{i}'-x_{i})\left\{1+\frac{1}{6}(x_{i}^{\prime 2}+x_{i}'x_{i}+x_{i}^{2})\right\}+\mathcal{O}(x_{i}^{5})+\mathcal{O}(x_{i}^{\prime 5}). \\
  \end{eqnarray*}
  Then,
  \begin{eqnarray*}
    && \frac{1}{n}\sum_{i=1}^{n}d_{\mathrm{R}}(g_{i},f_{i}\odot\hat{\beta})-\frac{1}{n}\sum_{i=1}^{n}d_{\mathrm{R}}(g_{i},f_{i}\odot\beta') \\
    &=& \frac{1}{n}\sum_{i=1}^{n}\left[(x_{i}'-x_{i})\left\{1+\frac{1}{6}(x_{i}^{\prime 2}+x_{i}'x_{i}+x_{i}^{2})\right\}+\mathcal{O}(x_{i}^{5})+\mathcal{O}(x_{i}^{\prime 5}) \right] \\
    &\lesssim& \frac{3}{2}\cdot\frac{1}{n}\sum_{i=1}^{n}(x_{i}'-x_{i}) \\
    &=& \frac{3}{2}\cdot\frac{1}{n}\sum_{i=1}^{n}\left\{H^{2}(g_{i},f_{i}\odot\hat{\beta})-H^{2}(g_{i},f_{i}\odot\beta')\right\} \\
    &<& 0.
  \end{eqnarray*}
  Thus, $\hat{\beta}$ also minimizes the average Riemannian distance between the outcome and warped densities.
\end{proof}

\subsection{Proof of Theorem~\ref{thm:beta_asmp}}
\label{appendix:sub:proof_thm_beta_asmp}

\begin{proof}
  The space of the half densities, $\mathbb{S}_{\infty}$, is equipped with the $\mathbb{L}^{2}$-metric. As demonstrated in Section~\ref{sec:model}, after parallel transport, the model errors satisfy that $\mathbb{E}\varepsilon_{i}=0$. The proposed estimator is the minimizer of the average $\mathbb{L}^{2}$-distance in $\mathbb{S}_{\infty}$. From the law of large numbers, the consistency of the proposed estimator holds.
\end{proof}
The consistency in Theorem~\ref{thm:beta_asmp} can be considered as an analogy of the consistency of the ordinary least squares estimator in linear regression.
Proposition~\ref{prop:beta_Riemannian} shows the sufficiency that the proposed estimator also minimizes the average Riemannian distance between the outcome and warped density functions. From this perspective, it also demonstrates the consistency of the proposed estimator.

\subsection{A discussion of Theorem~\ref{thm:w_asmp}}
\label{appendix:sub:proof_thm_w_asmp}

Here, we first list the assumptions for the theoretical results. These assumptions may not be the weakest possible conditions. Improvements and relaxations of these assumptions will be a future direction.

\begin{description}
  \item[(A1)] $w(\omega)$ is in the reproducing kernel Hilbert space and uniformly continuous on $[0,1]$.
  \item[(A2)] $w(\omega)$ has finite $\mathbb{L}^{2}$-norm, i.e., $\|w(\omega)\|^{2}=\int_{0}^{1} w^{2}(\omega)~\mathrm{d}\omega<\infty$.
  \item[(A3)] Assumptions in Theorem~\ref{thm:beta_asmp} hold.
\end{description}

Assumptions (A1) and (A2) regulate the weight function in smooth monotone transformations. In the reproducing kernel Hilbert space, \citet{cox1983asymptotics} studied the asymptotic properties of nonparametric regression estimates obtained from smoothing splines. Together with Assumption (A3), the conclusions can be extended to the proposed estimator of $w(\omega)$. Under these regularity conditions, the proposed estimator can be viewed as a pointwise $M$-estimator. Thus, the asymptotic distribution of an $M$-estimator is derived as in the theorem.


\subsection{The Riemannian structure of \texorpdfstring{$\Gamma_{\beta}$}{}}
\label{appendix:sub:Rstructure_Gamma}

This section discusses the Riemannian structure of $\Gamma_{\beta}$, which is the space of the warping functions on $[0,1]$. $\Gamma_{\beta}$ is not a linear space. Analogous to the space of probability distributions, a proper mapping is necessary to transform the metric to the standard $\mathbb{L}^{2}$ metric. For $\beta\in\Gamma_{\beta}$, define its \textit{square-root slope function (SRSF)} as
\begin{equation}
  S(\beta):\Gamma_{\beta}\rightarrow\mathbb{R}, \quad \psi(\omega)\equiv S(\beta)(\omega)=\mathrm{sgn}(\beta'(\omega))\sqrt{|\beta'(\omega)|}=\sqrt{\beta'(\omega)},
\end{equation}
where $\mathrm{sgn}(\cdot)$ is the sign function and $\mathrm{sgn}(\beta'(\omega))=1$ as $\beta$ is smooth and strictly increasing. The space of $S(\beta)$, denoted as $\Psi=\{S(\beta)~|~\beta\in\Gamma_{\beta}\}$, is called the space of square-root densities (SRDs). Since $\beta(0)=0$, $S(\beta)$ is bijection, that is, for a given $\psi\in\Psi$,
\[
  S^{-1}(\psi)(\omega)=\int_{0}^{\omega}\psi^{2}(u)~\mathrm{d}u.
\]
$\psi$ is positive and has unit $\mathbb{L}^{2}$-norm,
\[
  \|\psi\|^{2}=\int_{0}^{1}\psi^{2}(\omega)~\mathrm{d}\omega=\int_{0}^{1}\beta'(\omega)~\mathrm{d}\omega=\beta(1)-\beta(0)=1.
\]
Thus, $\Psi=\{\psi:[0,1]\rightarrow\mathbb{R}^{+}~|~\|\psi\|^{2}=1\}$ is the \textit{positive orthant} of of the unit sphere in the Hilbert space $\mathbb{L}^{2}([0,1])$. The arclenth distance in $\Psi$ is then equivalent to the Fisher-Rao metric in $\Gamma$,
\[
  d_{\mathrm{FR}}(\beta_{1},\beta_{2})=d(\psi_{1},\psi_{2})=\cos^{-1}\left(\langle\psi_{1},\psi_{2}\rangle\right)=\cos^{-1}\left(\int_{0}^{1}\psi_{1}(\omega)\psi_{2}(\omega)~\mathrm{d}\omega\right).
\]

\subsection{Under the Wasserstein metric}
\label{appendix:sub:Wasserstein}

In this section, we demonstrate that under the Wasserstein metric, the warping function, $\beta$, is not  isometry. For two non-negative probability density functions $f_{1},f_{2}\in\mathscr{P}$, the Wasserstein distance between the two is defined as
\begin{equation}
  d_{\text{W}}(f_{1},f_{2})=\left\{\int_{0}^{1}\left(F_{1}^{-1}(p)-F_{2}^{-1}(p)\right)^{2}~\mathrm{d}p\right\}^{1/2},
\end{equation}
where $F_{i}$ is the cumulative distribution function of $f_{i}$, for $i=1,2$. Let $g_{i}=f_{i}\odot\beta$ and $G_{i}$ is the cumulative distribution function of $g_{i}$,
\[
  G_{i}(u)=\int_{0}^{u}g_{i}(\omega)~\mathrm{d}\omega=\int_{0}^{u}f_{i}(\beta(\omega))\beta'(\omega)~\mathrm{d}\omega=\int_{0}^{\beta(u)}f_{i}(\nu)~\mathrm{d}\nu
\]
\[
  F_{i}(u)=\int_{0}^{u}f_{i}(\omega)~\mathrm{d}\omega \quad \Rightarrow \quad G_{i}^{-1}(v)=\beta^{-1}(F_{i}^{-1}(v))
\]
\[
  d_{\text{W}}(g_{1},g_{2}) = \left\{\int_{0}^{1}\left(G_{1}^{-1}(p)-G_{2}^{-1}(p)\right)^{2}~\mathrm{d}p\right\}^{1/2} = \left\{\int_{0}^{1}\left(\beta^{-1}(F_{1}^{-1}(p))-\beta^{-1}(F_{2}^{-1}(p))\right)^{2}~\mathrm{d}p\right\}^{1/2}
\]
$d_{\text{W}}(g_{1},g_{2})=d_{\text{W}}(f_{1},f_{2})$ holds if and only if $\beta$ is the identity function. Thus, $\beta$ is not isometry under the Wasserstein metric.


\bibliographystyle{apalike}
\bibliography{Bibliography}

\begin{thebibliography}{}

\bibitem[{Alzheimer's Association}, 2022]{AD2022}
{Alzheimer's Association} (2022).
\newblock 2022 {Alzheimer's} disease facts and figures.
\newblock {\em Alzheimer's \& Dementia}, 18(4):700--789.

\bibitem[Bauer et~al., 2020]{bauer2020diffeomorphic}
Bauer, M., Joshi, S., and Modin, K. (2020).
\newblock Diffeomorphic density registration.
\newblock In {\em Riemannian Geometric Statistics in Medical Image Analysis},
  pages 577--603. Elsevier.

\bibitem[Chen and M{\"u}ller, 2023]{chen2023sliced}
Chen, H. and M{\"u}ller, H.-G. (2023).
\newblock Sliced wasserstein regression.
\newblock {\em arXiv preprint arXiv:2306.10601}.

\bibitem[Chen et~al., 2021]{chen2021wassersteinreg}
Chen, Y., Lin, Z., and M{\"u}ller, H.-G. (2021).
\newblock Wasserstein regression.
\newblock {\em Journal of the American Statistical Association}, pages 1--14.

\bibitem[Cornea et~al., 2017]{cornea2017regression}
Cornea, E., Zhu, H., Kim, P., Ibrahim, J.~G., and {Alzheimer's Disease
  Neuroimaging Initiative} (2017).
\newblock Regression models on {Riemannian} symmetric spaces.
\newblock {\em Journal of the Royal Statistical Society: Series B (Statistical
  Methodology)}, 79(2):463--482.

\bibitem[Cox, 1983]{cox1983asymptotics}
Cox, D.~D. (1983).
\newblock Asymptotics for {M}-type smoothing splines.
\newblock {\em The Annals of Statistics}, pages 530--551.

\bibitem[Dai et~al., 2021]{dai2021modeling}
Dai, X., Lin, Z., and M{\"u}ller, H.-G. (2021).
\newblock Modeling sparse longitudinal data on {Riemannian} manifolds.
\newblock {\em Biometrics}, 77(4):1328--1341.

\bibitem[Davis et~al., 2010]{davis2010population}
Davis, B.~C., Fletcher, P.~T., Bullitt, E., and Joshi, S. (2010).
\newblock Population shape regression from random design data.
\newblock {\em International Journal of Computer Vision}, 90(2):255--266.

\bibitem[Doshi et~al., 2016]{doshi2016muse}
Doshi, J., Erus, G., Ou, Y., Resnick, S.~M., Gur, R.~C., Gur, R.~E.,
  Satterthwaite, T.~D., Furth, S., Davatzikos, C., and {Alzheimer's
  Neuroimaging Initiative} (2016).
\newblock {MUSE}: {MUlti}-atlas region {Segmentation} utilizing {Ensembles} of
  registration algorithms and parameters, and locally optimal atlas selection.
\newblock {\em Neuroimage}, 127:186--195.

\bibitem[Friedrich, 1991]{friedrich1991fisher}
Friedrich, T. (1991).
\newblock Die {Fisher}-{Information} und symplektische strukturen.
\newblock {\em Mathematische Nachrichten}, 153(1):273--296.

\bibitem[Ghodrati and Panaretos, 2021]{ghodrati2021distribution}
Ghodrati, L. and Panaretos, V.~M. (2021).
\newblock Distribution-on-distribution regression via optimal transport maps.
\newblock {\em arXiv preprint arXiv:2104.09418}.

\bibitem[Ghosal et~al., 2023]{ghosal2023distributional}
Ghosal, R., Ghosh, S.~K., Schrack, J.~A., and Zipunnikov, V. (2023).
\newblock Distributional outcome regression and its application to modelling
  continuously monitored heart rate and physical activity.
\newblock {\em arXiv preprint arXiv:2301.11399}.

\bibitem[Helgason, 2001]{helgason2001differential}
Helgason, S. (2001).
\newblock {\em Differential geometry and symmetric spaces}, volume 341.
\newblock American Mathematical Soc.

\bibitem[Kneip and Utikal, 2001]{kneip2001inference}
Kneip, A. and Utikal, K.~J. (2001).
\newblock Inference for density families using functional principal component
  analysis.
\newblock {\em Journal of the American Statistical Association},
  96(454):519--542.

\bibitem[Lang, 2012]{lang2012fundamentals}
Lang, S. (2012).
\newblock {\em Fundamentals of differential geometry}, volume 191.
\newblock Springer Science \& Business Media.

\bibitem[Marron et~al., 2015]{marron2015functional}
Marron, J.~S., Ramsay, J.~O., Sangalli, L.~M., and Srivastava, A. (2015).
\newblock Functional data analysis of amplitude and phase variation.
\newblock {\em Statistical Science}, pages 468--484.

\bibitem[Mormino et~al., 2009]{mormino2009episodic}
Mormino, E., Kluth, J., Madison, C., Rabinovici, G., Baker, S., Miller, B.,
  Koeppe, R., Mathis, C., Weiner, M., Jagust, W., et~al. (2009).
\newblock Episodic memory loss is related to hippocampal-mediated
  $\beta$-amyloid deposition in elderly subjects.
\newblock {\em Brain}, 132(5):1310--1323.

\bibitem[Morris, 2015]{morris2015functional}
Morris, J.~S. (2015).
\newblock Functional regression.
\newblock {\em Annual Review of Statistics and Its Application}, 2:321--359.

\bibitem[Panaretos and Zemel, 2020]{panaretos2020invitation}
Panaretos, V.~M. and Zemel, Y. (2020).
\newblock {\em An invitation to statistics in {Wasserstein} space}.
\newblock Springer Nature.

\bibitem[Petersen and M{\"u}ller, 2016]{petersen2016functional}
Petersen, A. and M{\"u}ller, H.-G. (2016).
\newblock Functional data analysis for density functions by transformation to a
  {Hilbert} space.
\newblock {\em The Annals of Statistics}, 44(1):183--218.

\bibitem[Petersen et~al., 2022]{petersen2022modeling}
Petersen, A., Zhang, C., and Kokoszka, P. (2022).
\newblock Modeling probability density functions as data objects.
\newblock {\em Econometrics and Statistics}, 21:159--178.

\bibitem[Pini et~al., 2016]{pini2016brain}
Pini, L., Pievani, M., Bocchetta, M., Altomare, D., Bosco, P., Cavedo, E.,
  Galluzzi, S., Marizzoni, M., and Frisoni, G.~B. (2016).
\newblock Brain atrophy in {Alzheimer’s} disease and aging.
\newblock {\em Ageing Research Reviews}, 30:25--48.

\bibitem[Ramsay and Li, 1998]{ramsay1998curve}
Ramsay, J.~O. and Li, X. (1998).
\newblock Curve registration.
\newblock {\em Journal of the Royal Statistical Society: Series B (Statistical
  Methodology)}, 60(2):351--363.

\bibitem[Ramsay and Silverman, 2005]{ramsay2005functional}
Ramsay, J.~O. and Silverman, B.~W. (2005).
\newblock {\em Functional data analysis}.
\newblock Springer.

\bibitem[Rao, 1945]{rao1945information}
Rao, C. (1945).
\newblock Information and accuracy attainable in the estimation of statistical
  parameters. kotz s \& johnson nl (eds.), breakthroughs in statistics volume
  i: Foundations and basic theory.

\bibitem[Spivak, 1970]{spivak1970comprehensive}
Spivak, M. (1970).
\newblock {\em A comprehensive introduction to differential geometry},
  volume~4.
\newblock Publish or Perish, Incorporated.

\bibitem[Srivastava et~al., 2007]{srivastava2007riemannian}
Srivastava, A., Jermyn, I., and Joshi, S. (2007).
\newblock {Riemannian} analysis of probability density functions with
  applications in vision.
\newblock In {\em 2007 IEEE Conference on Computer Vision and Pattern
  Recognition}, pages 1--8. IEEE.

\bibitem[Srivastava and Klassen, 2016]{srivastava2016functional}
Srivastava, A. and Klassen, E.~P. (2016).
\newblock {\em Functional and shape data analysis}, volume~1.
\newblock Springer.

\bibitem[Tabatabaei-Jafari et~al., 2015]{tabatabaei2015cerebral}
Tabatabaei-Jafari, H., Shaw, M.~E., and Cherbuin, N. (2015).
\newblock Cerebral atrophy in mild cognitive impairment: A systematic review
  with meta-analysis.
\newblock {\em Alzheimer's \& Dementia: Diagnosis, Assessment \& Disease
  Monitoring}, 1(4):487--504.

\bibitem[Tang et~al., 2023]{tang2023differences}
Tang, B., Zhao, Y., Venkataraman, A., Tsapkini, K., Lindquist, M.~A., Pekar,
  J., and Caffo, B. (2023).
\newblock Differences in functional connectivity distribution after
  transcranial direct-current stimulation: A connectivity density point of
  view.
\newblock {\em Human Brain Mapping}, 44(1):170--185.

\bibitem[\v{C}encov, 1982]{cencov1982statistical}
\v{C}encov, N.~N. (1982).
\newblock Statiscal decision rules and optimal inference.
\newblock {\em Translations of Mathematical Monographs}, 53.

\bibitem[Wang et~al., 2016]{wang2016functional}
Wang, J.-L., Chiou, J.-M., and M{\"u}ller, H.-G. (2016).
\newblock Functional data analysis.
\newblock {\em Annual Review of Statistics and Its Application}, 3:257--295.

\bibitem[Wasserman, 2006]{wasserman2006all}
Wasserman, L. (2006).
\newblock {\em All of nonparametric statistics}.
\newblock Springer Science \& Business Media.

\bibitem[Wesenhagen et~al., 2020]{wesenhagen2020cerebrospinal}
Wesenhagen, K.~E., Teunissen, C.~E., Visser, P.~J., and Tijms, B.~M. (2020).
\newblock Cerebrospinal fluid proteomics and biological heterogeneity in
  {Alzheimer’s} disease: A literature review.
\newblock {\em Critical Reviews in Clinical Laboratory Sciences}, 57(2):86--98.

\bibitem[Yang, 2020]{yang2020random}
Yang, H. (2020).
\newblock Random distributional response model based on spline method.
\newblock {\em Journal of Statistical Planning and Inference}, 207:27--44.

\bibitem[Yang et~al., 2020]{yang2020quantile}
Yang, H., Baladandayuthapani, V., Rao, A.~U., and Morris, J.~S. (2020).
\newblock Quantile function on scalar regression analysis for distributional
  data.
\newblock {\em Journal of the American Statistical Association},
  115(529):90--106.

\bibitem[Zhang et~al., 2022]{zhang2022wasserstein}
Zhang, C., Kokoszka, P., and Petersen, A. (2022).
\newblock {Wasserstein} autoregressive models for density time series.
\newblock {\em Journal of Time Series Analysis}, 43(1):30--52.

\end{thebibliography}

\end{document}